\DeclareMathOperator*{\minimize}{\text{minimize}}
\DeclareMathOperator*{\maximize}{\text{maximize}}
\newcommand{\tran}{\mathsf{T}}
\newcommand{\Prob}{\mathbb{P}}
\newcommand{\cov}{\mathrm{Cov}}
\newcommand{\var}{\mathrm{Var}}
\newcommand{\distas}[1]{\mathrel{\overset{\text{#1}}{\sim}}}
\newcommand{\gls}{\mathrm{GLS}}
\newcommand{\Ex}{\mathbb{E}}
\newcommand{\wh}{\widehat}
\newtheorem{theorem}{Theorem}
\newtheorem{remark}{Remark}
\newcommand{\ols}{\mathrm{OLS}}
\DeclareMathOperator*{\argmin}{arg\,min}
\DeclareMathOperator*{\argmax}{arg\,max}
\DeclareMathOperator{\tr}{tr}
\newcommand{\sse}{\sigma^2_e}
\newcommand{\bsb}{\boldsymbol{b}}
\newcommand{\bsa}{\boldsymbol{a}}
\newcommand{\bbeta}{\boldsymbol{\beta}}
\newcommand{\btheta}{\boldsymbol{\theta}}
\newcommand{\balpha}{\boldsymbol{\alpha}}
\newcommand{\OmegaO}{O}
\title{Scalable solution to crossed random effects model with random slopes}
\author{Disha Ghandwani\\Stanford University
\and Swarnadip Ghosh\\Radix Trading
\and Trevor Hastie\\Stanford University
\and Art B. Owen \\Stanford University}
\date{July 2023}
\begin{document}
\maketitle
\begin{abstract}
The crossed random effects model is widely used, finding applications
in various fields such as longitudinal studies, e-commerce, and
recommender systems, among others. However, these models encounter
scalability challenges, as the computational time for standard
algorithms grows superlinearly with the number $N$ of observations in
the data set, commonly $\OmegaO(N^{3/2})$ or worse. Recent published works present scalable methods for crossed random effects in linear models and some generalized linear models, but those methods only allow for random intercepts. In this paper, we devise scalable algorithms for models that include random slopes. This addition brings substantial difficulty in estimating the random-effect covariance matrices in a scalable way. We address this issue by using a variational EM algorithm. Our proposed approach accommodates both diagonal covariance matrices and cases where no structure is assumed—a scenario common in fields such as psychology and neuroscience.
In simulations, the proposed method is substantially faster than
standard methods for large $N$. It is also more efficient than
ordinary least squares which has a problem of greatly
underestimating the sampling uncertainty in parameter estimates. We
illustrate the new method on a \emph{MovieLens} data set, as well as a large data set (five million observations) from the online retailer \emph{Stitch Fix}.
\end{abstract}
\section{Introduction}\label{sec.introduction}
In the era of \emph{over the top} (OTT) platforms like Netflix, where
customer acquisition is booming and a vast array of titles are hosted,
the need for computationally efficient recommender systems has become
crucial. In such scenarios, it is common for different clients to rate
varying numbers of items, which gives a haphazard missingness
pattern. We deal with one such problem where $R$ clients rate $C$
items, and it is usually the case that only $N \ll (R \times C)$ of
the ratings are observed. We consider the situation where we have
access to various characteristics of each (item, client) pair as well
as  the client's rating of the item. For instance, with the online
clothing retailer we have information such as the client's age, sex,
and purchase history, as well as item features such as material and
clothing style. For the movie data, we have genre, language, and
duration for the online content.

Our motivating data set comes from the
online clothing retailer Stitch Fix. Quoting from~\cite{ghos:hast:owen:logistic2022}:\\
\begin{quotation}
Stitch Fix is an online personal styling service.
One of their business models involves sending
customers a sample of clothing items. The customer may
keep and purchase any of those items and
return the others.
They have provided us
with some of their client ratings data.
That data was anonymized, void of personally identifying
information, and as a sample it does not reflect their total
numbers of clients or items at the time they provided it.
It is also from 2015. While it does not describe their
current business, it is a valuable data set for illustrative purposes.
\end{quotation}
We present results using these data in Section~\ref{sec:appl-cross-rand}, and we use the problem
setup throughout this paper to explain our choices.
The goal  is to model clients' ratings of items.
Let $y_{ij}$ represent the rating provided by client $i$ for item $j$,
and $\mathbf{x}_{ij}$ (typically a vector)  the covariate information for the
specific client-item pair $(i,j)$.
In such settings, ordinary least squares is not advisable as the
ratings are not independent. Instead, we expect some positive
correlation between ratings by different clients on the same item,
because some items are simply more popular than others in ways that
cannot always be attributed to $\mathbf{x}_{ij}$. We can similarly expect some
positive correlation between ratings by the same client on different items
because clients vary in their preferences.
To account for such covariance structure, crossed random-effects models are employed. These models are designed to capture the dependencies present in the ratings provided by the same client or for the same item.
One notable drawback of such  models is their limited scalability. As the data set grows in size, the computational burden associated with fitting crossed random effects models becomes challenging and inefficient.
~\cite{ghos:hast:owen:2021} considered the crossed random-effects model with random intercept terms for clients and items, given by
\begin{equation}
\label{eq:intercept_model}
y_{ij} = \beta_0+ a_i +b_j+\textbf{x}_{ij}^{\tran}\mathbf{\bbeta} + \varepsilon_{ij}, \hspace{0.5cm} \forall i \in \{1,\cdots, R\}, \hspace{0.3cm} j\in \{1,\cdots, C\}.
\end{equation}
Here, the random effects are $a_i \in \mathbb{R}$, $b_j \in \mathbb{R}$ and an error $ \varepsilon_{ij} \in \mathbb{R}$. The fixed effects are the regression parameters $\mathbf{\bbeta} \in \mathbb{R}^{p}$ and the intercept $\beta_0 \in \mathbb{R}$. They assume that $a_i \sim \mathcal{N}(0, \sigma^2_a)$ , $b_j \sim \mathcal{N}(0, \sigma^2_b), $
and $\varepsilon_{ij}\sim \mathcal{N}(0,\sigma^2_e)$ are all independent. Under this model,
$$\cov(y_{ij}, y_{ij'}) = \sigma^2_a \mbox{ and } \cov(y_{ij}, y_{i'j}) = \sigma^2_b \mbox{ for } i \neq i' , j \neq j'.$$

Standard approaches for fitting such models (including ``lmer" function of ``lme4" library in R) come with a computational
burden proportionate to $(R+C)^3$, where $R$ and $C$ are the number of
clients and items respectively. However, due to the relationship $RC
\ge N$, this computational cost can be expressed as
$\OmegaO(N^{3/2})$. To address this limitation, \cite{crelin} introduced a moment-based method for estimating variance parameters in the intercept model ($\ref{eq:intercept_model}$) with a computational cost that is at most linear. More recently, \cite{ghos:hast:owen:2021} developed a backfitting algorithm inspired by the work of \cite{buja:hast:tibs:1989} that achieves linear computational cost in terms of $N$.
In contrast, the plain Gibbs sampler is not scalable, as it exhibits a
computational cost of $\OmegaO(N^{3/2})$, as shown by \cite{crevarcomp}. They also show that many other Bayesian approaches do not work well in crossed random effects settings. Recent work has shown that collapsed Gibbs sampling can be made to scale well in crossed random effects settings. See \cite{papa:robe:zane:2020} and \cite{ghos:zhon:2021:tr}.
 \cite{ghosh:thesis} presents an insightful connection between the convergence
of backfitting and collapsed Gibbs. 
These works do not include the random slopes that we focus on here. 

During the process of model fitting, \cite{ghos:hast:owen:2021} also
produce (posterior or BLUP) estimates for $(a_i)_{i=1}^R$ and $(b_j)_{j=1}^C$, which can be used for making predictions on test data. This implies that the crossed random effects model can serve as a scalable recommender system. To assess its prediction accuracy, we measured the performance of the crossed random effects model with a random intercept on Stitch Fix data using a random train-test split. In comparison to ordinary least squares, which yielded an $R^2$ of 4.61$\%$, the crossed random effects model (\ref{eq:intercept_model}) achieved an impressive $R^2$ of 18.42 $\%$. This significant improvement in $R^2$
shows the efficiency gain from
crossed random effects models. We also demonstrate the gains achieved
on the \emph{MovieLens} data.

These developments motivated us to develop an
efficient and scalable implementation for crossed-random effects with
both random intercepts and slopes.
\begin{equation}
\label{eq:random_slopes_uncompact}
y_{ij} = \beta_0+ a_i + b_j + \mathbf{x}_{ij}^{\tran}(\bbeta+\tilde{\bsa}_i+\tilde{\bsb}_j) + \varepsilon_{ij}, \hspace{0.3cm} \forall i \in \{1,\cdots, R\}, \hspace{0.3cm} j\in \{1,\cdots,C\}, \end{equation}
where we have augmented (\ref{eq:intercept_model}) with  random slopes
$\tilde{\bsa}_i \in \mathbb{R}^{p}$ and $\tilde{\bsb}_j \in
\mathbb{R}^{p}$. We assume that $\tilde{\bsa}_i \sim \mathcal{N}_p(0,
\tilde{\Sigma}_a)$ and $\tilde{\bsb}_j \sim \mathcal{N}_p(0,
\tilde{\Sigma}_b)$. We further assume that $(a_i,
  \tilde{\bsa}_i)$ follows a multivariate gaussian distribution and is
  independent of $(b_j, \tilde{\bsb}_j)$. The crossed random effects
model with random slopes helps us capture the variability of the
effect size of covariates among different clients and
items. Figure~\ref{fig:random_slope_intuition} provides a visual
representation of the model with random intercept and random
slopes.  The first plot
represents a random effects model with a random intercept
for clients and a fixed slope for the quantitative feature ``match'';
the second plot has a random intercept as well as a random
slope for ``match".
\begin{figure}[ht]
\centering
\includegraphics[width = \linewidth]{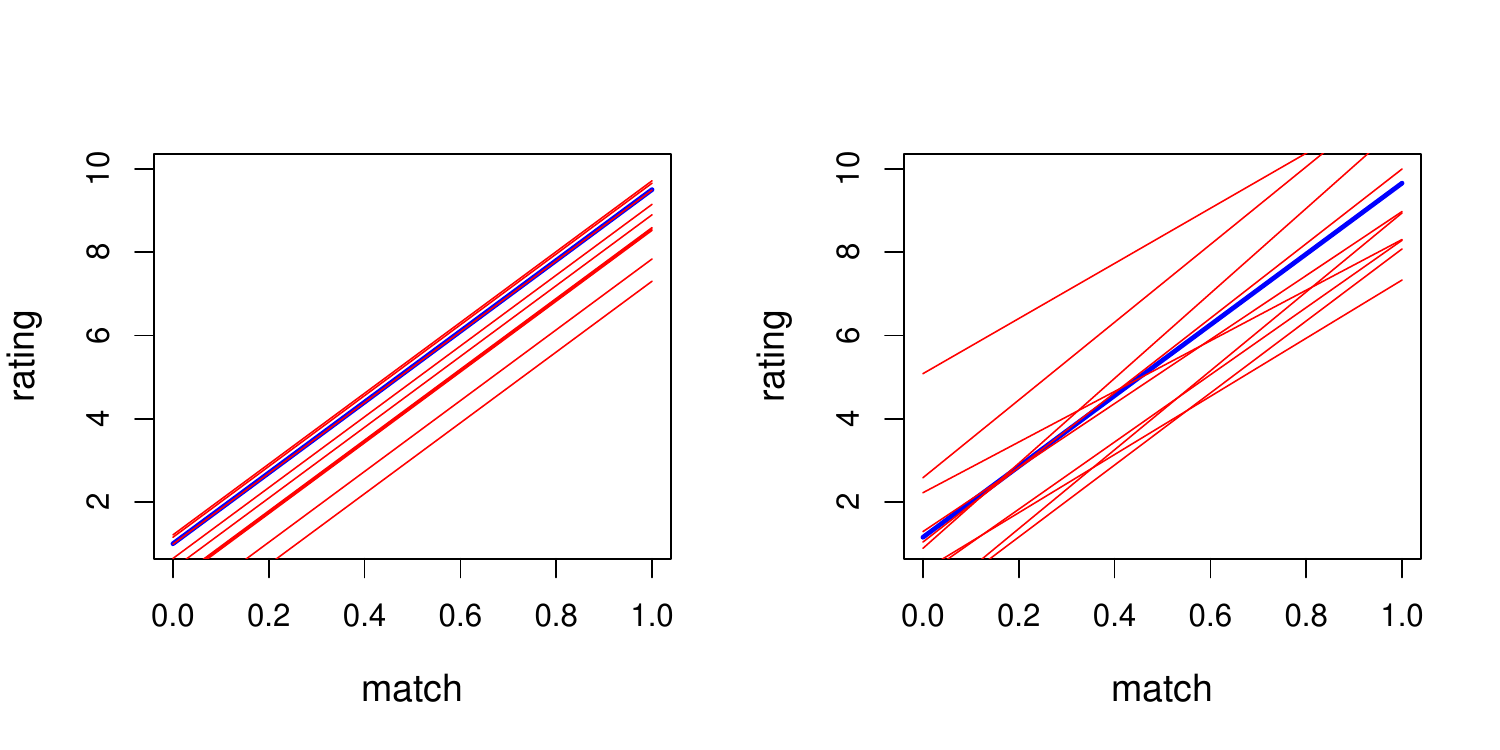}
\caption{In the left plot, we illustrate a model with random
  intercepts and a fixed slope for the quantitative variable
  ``match''. In the right plot, we have random intercepts as well as
  random slopes for ``match''. The blue line represents the fixed
  intercept and slope in both plots.}
\label{fig:random_slope_intuition}
\end{figure}
We can simpify the notation in ($\ref{eq:random_slopes_uncompact}$) by
including 1 as the first variable in $\textbf{x}_{ij}$ to accommodate the intercept:
\begin{equation}
\label{eq:random_slopes}
y_{ij} = \textbf{x}_{ij}^{\tran}(\bbeta + \bsa_i +\bsb_j) + \varepsilon_{ij}, \hspace{0.5cm} \forall i \in \{1,\cdots, R\}, \hspace{0.3cm} j\in \{1,\cdots,C\}.
\end{equation}
Now $\textbf{x}_{ij} \in
\mathbb{R}^{p+1}$, and we assume that $\bsa_i \sim \mathcal{N}_{p+1}(0, \Sigma_a)$ , $\bsb_j \sim \mathcal{N}_{p+1}(0, \Sigma_b), $
and $\varepsilon_{ij} \distas{i.i.d} \mathcal{N}(0,\sigma^2_e)$ are all
independent. Hence the random effects include both
  intercepts and slopes.\\
\begin{remark}
\label{remark:randomslopes}
For notational simplicity, we have included random slopes for all covariates when describing our method.   In practice, several variations are attractive and easily accommodated\footnote{The R and Python code provided at \url{https://github.com/dishaghandwani/randomslopes} handles these variations.}: 
\begin{itemize}
\item We may want random slopes for just a subset of the covariates, depending on their nature. 
\item We may want different subsets of covariates having  random
  slopes for clients than for items. 
\item Some covariates are specific to the client (e.g., age or gender) and do not change with different items. Therefore, client random slopes do not make sense and should not be included. However, item random slopes can still be used. The reverse applies to item-specific covariates
\end{itemize}
\end{remark}
The goal is to estimate the fixed effect $\bbeta$ and covariance parameters $\Sigma_a$, $\Sigma_b$, and $\sigma^2_e$. When $\bsa_i, \bsb_j$
and $\varepsilon_{ij}$ are all Gaussian in (\ref{eq:random_slopes}), the likelihood for ($\bbeta, \Sigma_a, \Sigma_b, \sigma^2_e$) is
\begin{equation}
\label{eq:likelihood}
\frac{1}{\sqrt{(2\pi)^N|\mathcal{V}|}} \exp\left(-\frac{1}{2}(y - X \bbeta )^{\tran} \mathcal{V}^{-1}(y-X \bbeta)\right).
\end{equation}
Here, the vector $Y$ denotes all of the $y_{ij}$ placed into a vector
in $\mathbb{R}^N$ and $X$ denotes $\mathbf{x}_{ij}$ stacked compatibly into a
matrix in $ \mathbb{R}^{N \times (p+1)}$.
$\mathcal{V} \in \mathbb{R}^{N \times
  N}$ represents the covariance matrix of
the $N$-vector $Y$, which is a somewhat complex function of ($\bbeta, \Sigma_a, \Sigma_b, \sigma^2_e$):
\begin{equation}
\label{eq:V}
\mathcal{V} = \left( X \Sigma_a X^{\tran} \right) \bullet \left(Z_a Z_a^{\tran}\right) + \left(X \Sigma_b X^{\tran}\right) \bullet \left(Z_b Z_b^{\tran}\right) + \sigma^2_e I_N, \end{equation}
where $Z_a \in \lbrace 0,1 \rbrace^{N \times R}$ denotes the indicator
(one-hot encoding)
matrix for clients, $Z_b \in \lbrace 0,1 \rbrace^{N \times C}$ denotes
the indicator matrix for items and $A \bullet B$ denotes elementwise (Hadamard)
product of matrix $A$ and $B$. 
Let data point $t_1$ refer to the
client-item pair $(i_1,j_1)$ and data point $t_2$ refer to the pair $(i_2,j_2)$, then
\begin{equation}\label{eq:1}
\mathcal{V}_{t_1,t_2} =
\begin{cases}
0, & \text{if } i_1 \neq i_2 \text{ and } j_1 \neq j_2, \\
x_{i_1 j}^{\tran} \Sigma_b x_{i_2j} , & \text{if } i_1 \neq i_2 \text{ and } j_1 = j_2 = j,\\
x_{ij_1}^{\tran} \Sigma_a x_{ij_2} , & \text{if } i_1 = i_2 = i \text{ and } j_1 \neq j_2,\\
\mathbf{x}_{ij}^{\tran} \Sigma_a \mathbf{x}_{ij}^{\tran} + \mathbf{x}_{ij}^{\tran} \Sigma_b \mathbf{x}_{ij} + \sigma^2_e , & \text{if } (i_1, j_1) = (i_2, j_2) = (i,j).\\
\end{cases}\end{equation}
The bottom term in (\ref{eq:1}) represents a diagonal term in
$\mathcal{V}$, i.e. the variance of $y_{ij}$.
The maximum likelihood estimate (MLE) of
$\bbeta$ is also the generalized least squares (GLS) estimate.
Hence if $\mathcal{V}$ were known (i.e. we have estimates for
its ingredients), we could directly estimate $\bbeta$ by GLS.
However, since $\mathcal{V}$  is generally dense and not usefully structured, obtaining $\hat\bbeta_{\gls}$ naively by GLS would  be computationally
prohibitive ($\mathcal{O}(N^3)$).
\begin{itemize}
\item
  Our first approach tackles this GLS problem using method-of-moment
  estimates for the covariance parameters. In Section \ref{sec.backfitting}, we describe an efficient
  backfitting procedure for estimating $\bbeta$ by GLS if the the
  covariance parameters are known. We also provide consistent
  estimates of the covariance parameters using method-of-moments, as
  described in Section~\ref{sec.covariance}. Both of the tasks can be
  accomplished in $\mathcal{O}(N)$ time.
\end{itemize}
The drawback of this first approach is that when  $p$ is large and the
covariance matrices are non-diagonal, the method of moments may not
produce accurate estimates for the covariance parameters.
Direct
maximum likelihood estimation of all the parameters using
(\ref{eq:likelihood}) seems out of the question.
\begin{itemize}
\item In our second approach, we maximize an approximation to the likelihood~(\ref{eq:likelihood})
using \emph{variational EM} to estimate the fixed effects and covariance parameters simultaneously. It helps us in estimating covariance matrices precisely even when they are unstructured or $p$ is sufficiently large. Another advantage of this approach is its computational efficiency, with each step at most $\mathcal{O}(N)$.
\end{itemize}
The variational EM algorithm is closely related to the backfitting algorithm. Variational EM has been previously explored for fast and accurate estimation of non-nested binomial hierarchical models by~\cite{max:2022}.
We speed up the convergence rate for both of the approaches ---
backfitting and variational EM --- using a technique referred to as ``clubbing'' discussed in Section \ref{sec.backfitting}.
\subsection{Paper Outline}
\begin{itemize}
\item In Section \ref{sec.backfitting}, we discuss the backfitting approach with and without clubbing.
\item In Section \ref{sec.covariance}, we discuss the estimation of covariance parameters using the method of moments in two scenarios: when the covariance matrices, $\Sigma_a$ and $\Sigma_b$ are diagonal, and when they are non-diagonal.
\item In Section \ref{sec.variational}, we discuss the application of
  variational EM in estimating fixed as well as covariance parameters
  using an approximate maximum likelihood approach.
\item In Section~\ref{sec:results}, we show the performance of these methods
  using both simulated and real data.
\end{itemize}
\section{Estimation of \texorpdfstring{$\bbeta$}{beta} by GLS when the covariance parameters are known}\label{sec.backfitting}
If the covariance parameters are known, the aim is to obtain $ \hat\bbeta_{\gls}$ in $\mathcal{O}(N)$ time. We know that the GLS estimate of $\bbeta$ is
\begin{equation}
\label{eq:beta_gls}
\hat\bbeta_{\gls} = \argmin_{\bbeta} (Y-X\bbeta)^\tran\mathcal{V}^{-1}(Y -X\bbeta)=(X^{\tran}\mathcal{V}^{-1}X)^{-1}X^{\tran}\mathcal{V}^{-1}Y .
\end{equation}
But, we couldn't use the above formula directly to obtain $
\hat\bbeta_{\gls}$ as computing $\mathcal{V}^{-1} $ would take at least
$\mathcal{O}(N^{3/2})$ time in the above setting (and more generally
$\mathcal{O}(N^{3})$). The theorem below helps us in setting up the
motivation for the $\mathcal{O}(N)$) backfitting algorithm.
\begin{restatable}{theorem}{robin}
\label{thm:robinson}
Consider the solutions to the following penalized least-squares problem
\begin{equation}
\label{eq:loss}
\min_{\bbeta,\bsa_i,\bsb_j} \sum_{i=1}^R \sum_{j=1}^{C}\frac{z_{ij}(y_{ij} - \mathbf{x}_{ij}^{\tran}(\bbeta + \bsa_i +\bsb_j))^2}{\sigma^2_e} + \sum_{i=1}^R \bsa_i^{\tran} \Sigma^{-1}_a \bsa_i + \sum_{j=1}^C \bsb_j^{\tran} \Sigma_b^{-1} \bsb_j .
\end{equation}
Then $\hat\bbeta = \hat\bbeta_{\gls}$ and the $\{\hat\bsa_i\}_{i=1}^{R}$ and $\{\hat\bsb_j\}_{j=1}^C$ are the best linear unbiased prediction (BLUP) estimates of the random effects if the covariance parameters $\Sigma_b$, $\Sigma_b$ and $\sigma^2_e$ are fixed and known.
\end{restatable}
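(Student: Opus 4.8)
The plan is to recast the objective in matrix form and then ``profile out'' the random effects, reducing the problem to the GLS criterion. Stack the random effects into a single vector $u=(a_1^\tran,\dots,a_R^\tran,b_1^\tran,\dots,b_C^\tran)^\tran$ and let $\mathcal{Z}$ be the corresponding design matrix, whose row indexed by the observed pair $(i,j)$ carries $x_{ij}^\tran$ in the block belonging to client $i$ and in the block belonging to item $j$ (zeros elsewhere); write $D=\mathrm{blockdiag}(I_R\otimes\Sigma_a,\,I_C\otimes\Sigma_b)$ for the prior covariance of $u$. With this notation the objective in \eqref{eq:loss} is exactly
\begin{equation*}
Q(\beta,u)=\sse^{-1}(Y-X\beta-\mathcal{Z}u)^\tran(Y-X\beta-\mathcal{Z}u)+u^\tran D^{-1}u,
\end{equation*}
and a direct check against \eqref{eq:V} shows $\mathcal{V}=\sse I_N+\mathcal{Z}D\mathcal{Z}^\tran$. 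The observation indicators $z_{ij}$ are absorbed into the fact that $Y$, $X$ and $\mathcal{Z}$ contain only the observed rows.

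First I would minimize $Q$ over $u$ for fixed $\beta$. Since this is a strictly convex ridge-type quadratic in $u$, setting the gradient to zero gives the closed form
\begin{equation*}
\hat u(\beta)=M^{-1}\mathcal{Z}^\tran(Y-X\beta),\qquad M:=\mathcal{Z}^\tran\mathcal{Z}+\sse D^{-1}.
\end{equation*}
Substituting $\hat u(\beta)$ back and expanding, the cross terms combine so that the profiled objective collapses to
\begin{equation*}
Q(\beta,\hat u(\beta))=(Y-X\beta)^\tran\left(\sse^{-1}I_N-\sse^{-1}\mathcal{Z}M^{-1}\mathcal{Z}^\tran\right)(Y-X\beta).
\end{equation*}
The key algebraic step is to recognize the bracketed matrix as $\mathcal{V}^{-1}$ via the Woodbury identity applied to $\mathcal{V}=\sse I_N+\mathcal{Z}D\mathcal{Z}^\tran$. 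Hence $Q(\beta,\hat u(\beta))=(Y-X\beta)^\tran\mathcal{V}^{-1}(Y-X\beta)$ exactly, with no $\beta$-free remainder, and minimizing over $\beta$ reproduces the GLS criterion in \eqref{eq:beta}. This establishes $\hat\beta=\hat\beta_{\gls}$.

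It remains to identify $\hat u=\hat u(\hat\beta_{\gls})$ with the BLUP. Here the crucial identity is the ``push-through'' relation $M^{-1}\mathcal{Z}^\tran=D\mathcal{Z}^\tran\mathcal{V}^{-1}$, which follows from $M\,D\mathcal{Z}^\tran=\mathcal{Z}^\tran\mathcal{Z}D\mathcal{Z}^\tran+\sse\mathcal{Z}^\tran=\mathcal{Z}^\tran\mathcal{V}$. It gives
\begin{equation*}
\hat u=D\mathcal{Z}^\tran\mathcal{V}^{-1}(Y-X\hat\beta_{\gls}).
\end{equation*}
Under the Gaussian model one has $\cov(u,Y)=D\mathcal{Z}^\tran$ and $\cov(Y)=\mathcal{V}$, so $D\mathcal{Z}^\tran\mathcal{V}^{-1}(Y-X\beta)=\Ex[u\mid Y]$ is precisely the form of the BLUP of $u$; plugging in $\hat\beta_{\gls}$ yields its empirical (estimated-$\beta$) version, and reading off the client and item blocks of $\hat u$ gives the stated $\{\hat a_i\}$ and $\{\hat b_j\}$.

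I expect the main obstacle to be purely the matrix algebra: verifying the Woodbury collapse in the profiling step and the push-through identity, and then confirming that the resulting linear predictor is the BLUP in the formal sense (unbiased, and minimizing prediction mean-squared error among linear unbiased predictors), which is the classical Henderson/Robinson argument. An equivalent route would differentiate $Q$ jointly in $(\beta,u)$ to obtain Henderson's mixed-model equations and then block-eliminate $u$; the two approaches rest on the same two identities, so neither avoids that core computation.
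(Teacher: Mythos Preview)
Your argument is correct but proceeds differently from the paper's. The paper works probabilistically: it identifies the objective in \eqref{eq:loss} with $-2\log P(Y,U)$ (up to constants), factors $P(Y,U)=P(Y)P(U\mid Y)$, and observes that because $U\mid Y$ is Gaussian, $\max_u P(U\mid Y)$ equals $(2\pi)^{-(p+1)/2}|\mathrm{Var}(U\mid Y)|^{-1/2}$, a quantity that does not depend on $\beta$. Hence $\argmax_\beta\max_u P(Y,U)=\argmax_\beta P(Y)=\hat\beta_{\gls}$, and the profiled $\hat u$ is automatically $\Ex[U\mid Y]$, i.e.\ the BLUP. No Woodbury or push-through identity is invoked. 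Your route instead profiles $u$ out by explicit matrix algebra, using Woodbury to collapse the profiled quadratic to the GLS criterion and the push-through identity to rewrite $\hat u$ in the form $D\mathcal{Z}^\tran\mathcal{V}^{-1}(Y-X\hat\beta_{\gls})$. The two are equivalent---your Woodbury step is the algebraic shadow of the paper's ``$\mathrm{Var}(U\mid Y)$ is $\beta$-free'' observation---but the paper's version is shorter and sidesteps the matrix identities entirely, while yours has the advantage of displaying the explicit BLUP formula and connecting transparently to Henderson's mixed-model equations.
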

Here we have introduced a binary \emph{masking} matrix $Z \in R \times C$ for
ease of notation:  $z_{ij} = 1$ if the rating $y_{ij}$ for the $(i,j)$
client-item pair is observed, and $z_{ij} = 0$ otherwise. 
\begin{remark}
In practice, we may also observe multiple observations for client-item
pairs, which entails additional notation. Our algorithms are able to
accomodate this seamlessly, as the code mostly involves grouping the observations by items or clients. In order to keep the notation simple, we restrict our presentation to the binary annototation of $z_{ij}$. Another important thing to note here is that, although the first term in \eqref{eq:loss} appears as a summation of $RC$ values, it is just for illustrative purpose. The summation is actually performed only on observed ratings. 
\end{remark}
The theorem was originally given by~\cite{robinson91:_that_blup}; we
provide an alternative proof in Appendix~\ref{Alternative_proof}. If the covariance
parameters are known or somehow estimated, the fixed-effect parameter
can be estimated by backfitting algorithms (block-coordinate descent) which minimize ($\ref{eq:loss}$). We estimate $\bbeta$ using backfitting with and without clubbing similar to~\cite{ghos:hast:owen:2021} and~\cite{ghos:hast:owen:logistic2022}.
\subsection{Vanilla Backfitting} The objective function
$(\ref{eq:loss})$ treats the three groups $\{\bsa_i\}_{i=1}^R$, $\{\bsb_j\}_{j=1}^C$, and
$\bbeta$ as parameters to be estimated by least squares,  with  $\{\bsa_i\}_{i=1}^R$ and $\{\bsb_j\}_{j=1}^C$
regularized via  generalized ridge penalties.   The idea of
backfitting is to cycle through the estimation of the parameters in each group, keeping the parameters fixed in the remaining groups, i.e., a block coordinate descent. Minimizing (\ref{eq:loss}) is equivalent to minimizing
\begin{equation}\sum_{i=1}^R \sum_{j=1}^{C}z_{ij}(y_{ij} - \mathbf{x}_{ij}^{\tran}(\bbeta + \bsa_i +\bsb_j))^2 + \sum_{i=1}^R \bsa_i^{\tran} \Lambda \bsa_i + \sum_{j=1}^C \bsb_j^{\tran} \Gamma \bsb_j , \end{equation}
where $\Lambda = \sigma^2_e \Sigma^{-1}_a$ and $\Gamma = \sigma^2_e \Sigma^{-1}_b$ are the precision matrices corresponding to the two random effects.
Thus, the fit for $\bsa_i$ keeping $\{\bsb_j\}_{j=1}^{C}$ and $\bbeta$ fixed is given by
\begin{equation}
\label{eq:(a_i)}
\hat {\bsa}_i = \bigg(\sum_{j=1}^{C} z_{ij}\mathbf{x}_{ij} \mathbf{x}^{\tran}_{ij}+ \Lambda\bigg)^{-1}\bigg(\sum_{j=1}^{C}z_{ij}\mathbf{x}_{ij} \Big(y_{ij} - \mathbf{x}_{ij} ^{\tran}(\bbeta +\bsb_j)\Big)\bigg) .
\end{equation}
Similarly, fit for $\bsb_j$ keeping $\{\bsa_i\}_{i=1}^{R}$ and $\bbeta$ fixed is given by
\begin{equation}
\label{eq:(b_j)}
\hat {\bsb}_j = \bigg(\sum_{i=1}^{R} z_{ij} \textbf{x}_{ij}\textbf{x}_{ij}^\tran+ \Gamma\bigg)^{-1}\bigg(\sum_{i=1}^{R} z_{ij}\mathbf{x}_{ij} \Big(y_{ij} - \mathbf{x}_{ij} ^{\tran}(\bbeta +\bsa_i)\Big)\bigg) .
\end{equation}
The fit for $\bbeta$ given $\bsa_i$ and $\bsb_j$ is given by
\begin{equation}
\label{eq:beta}
\hat\bbeta = \bigg(\sum_{j=1}^C \sum_{i=1}^{R}z_{ij} \mathbf{x}_{ij} \mathbf{x}^{\tran}_{ij}\bigg)^{-1}\bigg(\sum_{j=1}^C \sum_{i=1}^{R} \Big(y_{ij} - \mathbf{x}_{ij} ^{\tran}(\bsa_i+\bsb_j)\Big)z_{ij}\mathbf{x}_{ij} \bigg) .
\end{equation}
We cycle through these equations until the fits converge.
\begin{remark}
$\left.\right.$
\begin{itemize}
\item For each group of parameters, we form the partial residuals holding the others fixed. Then, we optimize with respect to that group. For the random effect parameters, the problem decouples, as each step in $(\ref{eq:(a_i)})$ and $(\ref{eq:(b_j)})$ indicate.
\item Convergence is measured in terms of the fits of the terms, $\mathbf{x}_{ij} ^{\tran} \bbeta, \{\mathbf{x}_{ij} ^{\tran} \bsa_i \}$ and $ \{\mathbf{x}_{ij} ^{\tran} \bsb_j\}$.
\item Each step decreases the objective, and the algorithm converges to the solution to (\ref{eq:loss}).
\item Notice that in $(\ref{eq:(a_i)})$, for fixed $i$, we only visit terms
  $j\in \{1,2,\ldots,C\}$
  for which $z_{ij}=1$. Hence, we visit in total $N$ terms for all $i$.
  Likewise, this applies to $(\ref{eq:(b_j)})$ and $(\ref{eq:beta})$. Hence the
  computational cost of obtaining all the above fits is $\mathcal{O}(N)$. 
\end{itemize}
\end{remark}
\subsection{Backfitting with clubbing}
One of the shortcomings of the vanilla backfitting algorithm is that
it can take a long time to converge. One of the reasons is that the solution has built-in constraints that are not visible to the individual steps. The theorem below provides two such constraints.
\begin{theorem}
The solutions $\{\bsa_i\}_{i=1}^R$ and $\{\bsb_j\}_{j=1}^C$ of $(\ref{eq:loss})$ satisfy $\sum_{i=1}^R \hat{\bsa}_i = 0 $ and $ \, \sum_{j=1}^C \hat{\bsb}_j = 0 $.
\end{theorem}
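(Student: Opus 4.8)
The plan is to read the result off the first-order stationarity conditions of the objective in (\ref{eq:loss}). That objective is a convex quadratic in the stacked parameters $(\beta,\{a_i\},\{b_j\})$ because the penalty matrices $\Sigma_a^{-1}$ and $\Sigma_b^{-1}$ are positive definite, so any minimizer must make the gradient with respect to each block of parameters vanish. I would introduce the shorthand $r_{ij} = y_{ij} - x_{ij}^{\tran}(\hat\beta + \hat a_i + \hat b_j)$ for the fitted residuals and then differentiate block by block.

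First I would record the stationarity condition in $\beta$, which reads $\sum_{i=1}^R\sum_{j=1}^C z_{ij}\, x_{ij}\, r_{ij} = 0$: the $x$-weighted residuals summed over all observed cells vanish. Call this $(\star)$. Next, the stationarity condition for a single block $a_i$ gives $\Sigma_a^{-1}\hat a_i = \tfrac{1}{\sigma^2_e}\sum_{j=1}^C z_{ij}\, x_{ij}\, r_{ij}$, i.e.\ the precision matrix applied to $\hat a_i$ equals the per-client $x$-weighted residual (up to the factor $1/\sigma^2_e$). The key step is then to sum these $a_i$-equations over $i=1,\dots,R$: the right-hand side becomes $\tfrac{1}{\sigma^2_e}\sum_{i,j} z_{ij}\, x_{ij}\, r_{ij}$, which is a scalar multiple of the left side of $(\star)$ and hence zero. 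This leaves $\Sigma_a^{-1}\sum_{i=1}^R \hat a_i = 0$, and since $\Sigma_a$ is invertible we may multiply through by $\Sigma_a$ to conclude $\sum_{i=1}^R \hat a_i = 0$ as a vector identity in $\mathbb{R}^{p+1}$. The argument for $\sum_{j=1}^C \hat b_j = 0$ is verbatim the same, summing the $b_j$-equations over $j$ and again invoking $(\star)$.

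I do not expect a genuine obstacle: the whole proof is a single aggregation of normal equations. The only points requiring care are the bookkeeping of the $\sigma^2_e$ factor and making explicit that it is the invertibility (positive-definiteness) of $\Sigma_a$ and $\Sigma_b$ that lets us pass from $\Sigma_a^{-1}\sum_i \hat a_i = 0$ to $\sum_i \hat a_i = 0$. Conceptually, the mechanism worth flagging is that the $\beta$ block is collinear with the aggregate of the $a_i$ blocks (and separately with the aggregate of the $b_j$ blocks), since $x_{ij}^{\tran}(\beta+a_i+b_j)$ is unchanged under shifting a constant between $\beta$ and the $a_i$; the unpenalized $\beta$ absorbs that common shift, and the ridge penalty on the random effects is exactly what pins their sums to zero.
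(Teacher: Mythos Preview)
Your proof is correct but takes a genuinely different route from the paper's. The paper argues by contradiction: assuming $\bar a = \tfrac{1}{R}\sum_i \hat a_i \neq 0$, it replaces $\hat a_i$ by $\tilde a_i = \hat a_i - \bar a$ and $\hat\beta$ by $\tilde\beta = \hat\beta + \bar a$, observes that the fit term is unchanged while the penalty $\sum_i a_i^{\tran} \Sigma_a^{-1} a_i$ strictly decreases (the usual variance-decomposition inequality $\sum_i (\hat a_i - \bar a)^{\tran}\Sigma_a^{-1}(\hat a_i - \bar a) < \sum_i \hat a_i^{\tran}\Sigma_a^{-1}\hat a_i$ when $\bar a \neq 0$), and derives a contradiction with minimality. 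Your argument instead aggregates the first-order stationarity conditions, combining the $\beta$-gradient identity $\sum_{i,j} z_{ij} x_{ij} r_{ij} = 0$ with the summed $a_i$-gradient identities to obtain $\Sigma_a^{-1}\sum_i \hat a_i = 0$ directly. Your approach is shorter and cleanly exploits the smooth convex structure of the problem; the paper's approach is more geometric and foregrounds the mechanism you mention in your final remark---that the unpenalized $\beta$ can absorb any common shift in the $a_i$, with the ridge penalty then forcing the random effects to be centered. Both are fully rigorous here.
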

\begin{proof} It is enough to prove $\sum_{i=1}^R \hat{\bsa}_i = 0 $, the proof for $ \, \sum_{j=1}^C \hat{\bsb}_j = 0 $ is similar. Suppose $\sum_{i=1}^R \hat{\bsa}_i \ne 0 $, define $ \bar{\bsa} = \sum_{i=1}^R \hat{\bsa}_i / R$ and $\tilde{\bsa}_i = \hat{\bsa}_i - \bar{\bsa} \, \, \forall i$. Then it is easy to show that $$\sum_{i=1}^R \tilde{\bsa}_i^{\tran} \Sigma^{-1}_a \tilde{\bsa}_i = \sum_{i=1}^R (\hat{\bsa}_i - \bar{\bsa})^{\tran} \Sigma^{-1}_a (\hat{\bsa}_i - \bar{\bsa}) < \sum_{i=1}^R \hat{\bsa}_i^{\tran} \Sigma^{-1}_a \hat{\bsa}_i,$$ unless $\bar{\bsa} = 0$.
If we choose, $\tilde \bbeta = \hat\bbeta + \bar{\bsa}$, i.e., $\tilde \bbeta + \tilde{\bsa}_i = \hat\bbeta + \hat{\bsa}_i$, then \begin{equation} \sum_{i=1}^R \sum_{j=1}^{C}\frac{z_{ij}(y_{ij} - \mathbf{x}_{ij} ^{\tran}(\tilde \bbeta + \tilde{\bsa}_i +\bsb_j))^2}{\sigma^2_e} = \sum_{i=1}^R \sum_{j=1}^{C}\frac{z_{ij}(y_{ij} - \mathbf{x}_{ij} ^{\tran}(\hat\bbeta + \hat{\bsa}_i +\bsb_j))^2}{\sigma^2_e}. \end{equation}
Thus,
\begin{align*}
\min_{\bbeta,\bsb_j} \sum_{i=1}^R \sum_{j=1}^{C} & \frac{z_{ij}(y_{ij} - \mathbf{x}_{ij} ^{\tran}(\bbeta + \tilde{\bsa}_i +\bsb_j))^2}{\sigma^2_e} + \sum_{i=1}^R \tilde{\bsa}_i^{\tran} \Sigma^{-1}_a \tilde{\bsa}_i \\
& < \min_{\bbeta,\bsb_j} \sum_{i=1}^R \sum_{j=1}^{C}\frac{z_{ij}(y_{ij} - \mathbf{x}_{ij} ^{\tran}(\bbeta + \hat{\bsa}_i +\bsb_j))^2}{\sigma^2_e} + \sum_{i=1}^R \hat{\bsa}_i^{\tran} \Sigma^{-1}_a \hat{\bsa}_i,
\end{align*}
unless $\bar {\bsa} = 0$ which leads to contradiction.
\end{proof}
The theorem above shows that the constraint $\sum_{i=1}^R \hat{\bsa}_i =
0 $ is automatically imposed if we minimize (\ref{eq:loss})
simultaneously in terms of $\{\bsa_i\}_{i=1}^{R}$ and $\bbeta$. Similarly,
the constraint $\sum_{j=1}^C \hat{\bsb}_j = 0 $ is automatically imposed
if we minimize (\ref{eq:loss}) simultaneously in terms of
$\{\bsb_j\}_{j=1}^C$ and $\bbeta$. We refer to this as ``clubbing'',
following \cite{ghos:hast:owen:2021}. We also combine this technique
with our variational EM approach in Section~\ref{sec:vari-em-cross} to achieve fast convergence.
\subsection{An efficient way to implement clubbing}
\label{sec.clubbing}
We consider the case of estimating  $\bbeta$ and $\bsa_i$ with
$\{\bsb_j\}_{j=1}^C$ fixed.
This  would be equivalent to minimizing the loss given by (\ref{eq:loss}) keeping $\{\bsb_j\}$ fixed.
Thus, if $r_{ij} = y_{ij} - \mathbf{x}^\tran_{ij} \bsb_j $, then we 
\begin{equation}
\minimize_{\bbeta,\{\bsa_i\}}\sum_{i=1}^R\sum_{j=1}^{C}z_{ij}\big(r_{ij} - \mathbf{x}_{ij} ^{\tran}(\bbeta + \bsa_i)\big)^2+\sum _{i=1}^R \bsa_i^{\tran}\Lambda \bsa_i ,
\end{equation}
which would be equivalent to simultaneously solving the following equations:
\begin{equation}\label{eq:a_i_clubbed}
\hat {\bsa_i }= \bigg(\sum_{j=1}^{C} z_{ij}\mathbf{x}_{ij} \mathbf{x}^{\tran}_{ij}+ \Lambda\bigg)^{-1}\bigg(\sum_{j=1}^{C} \Big(r_{ij} - \mathbf{x}_{ij} ^{\tran}\bbeta\Big) z_{ij}\mathbf{x}_{ij} \bigg),
\end{equation}
\begin{equation}
\label{eq:beta_clubbed}
\hat\bbeta = \bigg(\sum_{j=1}^C \sum_{i=1}^{R}z_{ij} \mathbf{x}_{ij} \mathbf{x}^{\tran}_{ij}\bigg)^{-1}\bigg( \sum_{i=1}^{R} \sum_{j=1}^C \Big(r_{ij} - \mathbf{x}_{ij} ^{\tran}\bsa_i\Big)z_{ij}\mathbf{x}_{ij} \bigg),
\end{equation}
where $\Lambda = \sigma^2_e \Sigma_a^{-1}$.\\
Let $f_a$ denote the fit ($N$ vector) corresponding to the terms $\{\mathbf{x}_{ij} ^{\tran} \hat {\bsa_i}\}$ and $X \hat\bbeta$ be the $N$-vector of fits for the fixed effects. Then from $(\ref{eq:a_i_clubbed})$, we can write
\begin{equation}
\label{eq:sim_ai}
f_a = S_a (r - X \hat \bbeta),
\end{equation}
and from $(\ref{eq:beta_clubbed})$, we have \begin{equation}
\label{eq:sim_beta}
X^{\tran}X \hat \bbeta = X^{\tran} (r - f_a).\end{equation}
Here, $S_a$ is the implicit $N \times N$ linear operator that computes
these fits, obtained by solving $(\ref{eq:a_i_clubbed})$ for each
$i$. We note that if the observations were ordered by groups of $i$, then
  $S_a$ would be block-diagonal, with the $i$th block a generalized ridge
  operator matrix for producing the fits for the observations with
  that value of $i$. This also implies that $S_a$ is always a symmetric matrix.  Plugging $(\ref{eq:sim_ai})$ into $(\ref{eq:sim_beta})$, we have
\begin{equation}
X^{\tran}X \hat \bbeta = X^{\tran} (r - S_a (r - X \hat \bbeta)).
\end{equation}
Collecting terms, we have
\begin{equation}
X^{\tran} \left(I - S_a\right) X \hat \bbeta = X^{\tran} \left(I - S_a\right) r,
\end{equation}
and hence
\begin{equation}
\hat \bbeta = \left[ X^{\tran} \left(I - S_a\right) X\right] ^{-1} X^{\tran} \left(I - S_a\right) r.
\end{equation}
This means that we need to apply $S_a$ to each of the columns of X, compute the fits, and take residuals, to produce $\tilde{X} = (I - S_a) X$, then
\begin{equation}
\hat \bbeta = \left( \tilde{X}^{\tran} X \right)^{-1} \tilde{X}^{\tran} r \mbox{ ( as $S_a$ is symmetric} ).
\end{equation}
We plug this $\hat\bbeta$ in equation $(\ref{eq:a_i_clubbed})$ to get
$\hat\bsa_i$ for each $i$. Now we repeat this process holding the $\{\bsa_i\}$ fixed at their latest
values $\{\hat {\bsa_i}\}$, and  estimate $\bbeta$ and $\{\bsb_j\}$ together. 
We continue the above process till fits of the terms, $\mathbf{x}_{ij} ^{\tran}
\bbeta, \{\mathbf{x}_{ij} ^{\tran} \bsa_i \}$ and $ \{\mathbf{x}_{ij} ^{\tran} \bsb_j\}$
converge.

 The time complexity for each step of backfitting with or without clubbing is $\mathcal{O}(Np^2)$.
\section{Estimation of covariance parameters using the
method of moments}\label{sec.covariance}
Method of moments (MoM) is a widely used approach for estimating
covariance parameters, but for the most part has been limited to cases
where there are a small number of scalar parameters. See, for example,
\cite{wu2012efficient} and \cite{crevarcomp}. In our setting, we have a
novel MoM approach   for crossed
random effects with covariance matrices of any  dimension.
Our MoM approach has
algorithmic complexity linear in $N$, and hence is scalable.

We discuss the estimation of covariance parameters under two scenarios: when $\Sigma_a$ and $\Sigma_b$ are assumed to be diagonal, and when nothing is assumed.
\subsection{\texorpdfstring{$\Sigma_a$}{Sigmaa} and \texorpdfstring{$\Sigma_b$}{Sigmab} are diagonal  matrices.}\label{sec:sigma_a-sigma_b-are}
If
$\Sigma_a$ and $\Sigma_b$ are diagonal matrices, we can obtain consistent estimates of $\Sigma_a, \Sigma_b$, and $\sigma^2_e$ using the method of moments. Our process involves solving equations of the form
\begin{equation}
\label{eq:mm1}
\begin{split}
\Ex\bigg[\sum_{i=1}^{R} \sum_{j=1}^{C} z_{ij} r^2_{ij}-\sum_{i=1}^{R}\frac{\left(\sum_{j=1}^{C} z_{ij} r_{ij}x^{(s)}_{ij}\right)^2}{\sum_{j=1}^{C} z_{ij}x^2_{ij(s)}}\bigg{|}X\bigg] & = \sum_{i=1}^{R} \sum_{j=1}^{C} z_{ij}\hat r^2_{ij}\\
& \hspace{0.5cm} -\sum_{i=1}^{R}\frac{\left(\sum_{j=1}^{C} z_{ij}\hat r_{ij}x^{(s)}_{ij}\right)^2}{\sum_{j=1}^{C} z_{ij}x^2_{ij(s)}},
\end{split}
\end{equation}
\begin{equation}
\label{eq:mm2}
\begin{split}
\Ex\bigg[\sum_{j=1}^{C} \sum_{i=1}^{R} z_{ij} r^2_{ij}-\sum_{j=1}^{C}\frac{\left(\sum_{i=1}^{R} z_{ij} r_{ij}x^{(s)}_{ij}\right)^2}{\sum_{i=1}^{R} z_{ij}x^2_{ij(s)}}\bigg{|}X\bigg] & = \sum_{j=1}^{C} \sum_{i=1}^{R} z_{ij}\hat r^2_{ij}\\
& \hspace{0.5cm} -\sum_{j=1}^{C}\frac{\left(\sum_{i=1}^{R} z_{ij}\hat r_{ij}x^{(s)}_{ij}\right)^2}{\sum_{i=1}^{R} z_{ij}x^2_{ij(s)}},
\end{split}
\end{equation}
for all s $\in \{0, \cdots, p\}$, and
\begin{equation}
\label{eq:mm3}
\Ex\bigg[\sum_{i=1}^{R} \sum_{j=1}^C z_{ij} \left( r_{ij} - \bar{r}_{..} \right)^2 \bigg{|}X\bigg]=\sum_{i=1}^{R} \sum_{j=1}^C z_{ij} \left( \hat r_{ij} - \bar{\hat r}_{..} \right)^2 .
\end{equation}
\begin{remark}
Here, $x^{(s)}_{ij}$ represents s-th covariate of the $(i,j)$
client-item pair, $r = Y - X \bbeta$, and $\hat r = Y - X
\hat\bbeta_{\ols}$.
\end{remark}

These equations can be seen as a generalization of the moment
equations considered by~\cite{ghos:hast:owen:2021}; we give more
insight into these equations in Appendix~\ref{appendix.mom}.
Using the matrix formulation in the appendix, we can show that
\begin{equation}
\label{eq:simplify}
\begin{split}
\Ex\bigg[&\sum_{i=1}^{R} \sum_{j=1}^{C} z_{ij} r^2_{ij}-\sum_{i=1}^{R}\frac{\left(\sum_{j=1}^{C} z_{ij} r_{ij}x^{(s)}_{ij}\right)^2}{\sum_{j=1}^{C} z_{ij}x^2_{ij(s)}} \bigg{|} X\bigg] \\
&=\sum_{s'=0}^{p} \sum_{i=1}^{R}\left[ \sum_{j=1}^{C} z_{ij} x^{(s')^2}_{ij}-\frac{\left(\sum_{j=1}^{C} z_{ij} x^{(s)}_{ij} x^{(s')}_{ij} \right)^2}{\sum_{j=1}^{C} z_{ij} x^{(s)^2}_{ij}}\right]\Sigma_{a,s's'} \\
& \hspace{0.8cm}+\sum_{s'=0}^{p}\sum_{i=1}^R\left[ \sum_{j=1}^{C}z_{ij} x^{(s')^2}_{ij} -\frac{\sum_{j=1}^{C}z_{ij} x^{(s)^2}_{ij} x^{(s')^2}_{ij}}{\sum_{j=1}^{C} z_{ij} x^{(s)^2}_{ij}}\right]\Sigma_{b,s's'} + \sigma^2_e (n-R) .\\
\end{split}
\end{equation}
\begin{remark}
The proof is provided in Appendix~\ref{appendix.mom}. The complexity
to compute all the quantities above is $\mathcal{O}(Np^2)$. Once these
quantities are computed, we solve equations $(\ref{eq:mm1}),
(\ref{eq:mm2})$ and $(\ref{eq:mm3})$ to get the covariance
parameters. Thus we obtain $2p+3$ equations involving $2p+3$ unknown
variance parameters, which we can solve to obtain MoM estimates for
the covariance parameters.
\end{remark}

Under the regularity conditions mentions in the theorems below, the covariance parameters estimated using the suggested MoM approach are consistent. We have implemented the MoM approach for the case when the covariance matrices are diagonal in Section $\ref{subsec.diagonal}$. The approach is found to provide empirically consistent estimates for the covariance parameters which are then used to estimate the fixed parameter using backfitting.
\begin{restatable}{theorem}{betaols}
\label{betaolsconsistent}
Let $X_{r}^{(a)} (X_{c}^{(b)})$ be the observations arranged in row (column) blocks such that each block has a fixed level for the first (second) factor. Also assume that $\max_{1\leq r \leq R}\frac{\lambda_{\max}(X_{r}^{(a)}\Sigma_{a}X_{r}^{(a) \tran})}{N} \to 0$, $\max_{1\leq c \leq C}\frac{\lambda_{\max}(X_{c}^{(b)}\Sigma_{b}X_{c}^{(b) \tran})}{N} \to 0$ and $\lambda_{min}(X^{\tran}X)/N \geq c$ for some $c > 0$, as $N \to \infty.$ Then $\hat{\bbeta}_{\ols}$ is consistent.
\end{restatable}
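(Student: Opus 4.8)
The plan is to exploit conditional unbiasedness of $\hat\beta_{\ols}$ and to show that its conditional covariance vanishes under the stated eigenvalue conditions, so that consistency follows from Chebyshev's inequality. Writing $M = (X^\tran X)^{-1}X^\tran$ and $\xi = Y - X\beta$, we have $\hat\beta_{\ols} - \beta = M\xi$. Since the random effects and the error all have conditional mean zero given $X$, $\Ex[\hat\beta_{\ols}\mid X] = \beta$, and the conditional covariance is
\begin{equation}
\cov(\hat\beta_{\ols}\mid X) = M\,\mathcal{V}\,M^\tran = (X^\tran X)^{-1} X^\tran \mathcal{V} X (X^\tran X)^{-1}.
\end{equation}
It therefore suffices to show $\|\cov(\hat\beta_{\ols}\mid X)\|_2 \to 0$, since then $\Ex[\|\hat\beta_{\ols}-\beta\|^2\mid X] = \tr(\cov(\hat\beta_{\ols}\mid X)) \le (p+1)\|\cov(\hat\beta_{\ols}\mid X)\|_2 \to 0$, and convergence in probability follows by the conditional Chebyshev inequality.

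The key structural step is to bound $\lambda_{\max}(\mathcal{V})$. I would decompose $\mathcal{V} = \mathcal{V}_a + \mathcal{V}_b + \sse I_N$ with $\mathcal{V}_a = (X\Sigma_a X^\tran)\bullet(Z_a Z_a^\tran)$ and $\mathcal{V}_b = (X\Sigma_b X^\tran)\bullet(Z_b Z_b^\tran)$. Because $(Z_a Z_a^\tran)_{t_1 t_2} = 1$ exactly when observations $t_1$ and $t_2$ share the same client, the Hadamard product annihilates every cross-client entry; hence, after permuting the rows into the row-block ordering, $\mathcal{V}_a$ is block diagonal with $r$-th block $X_r^{(a)}\Sigma_a X_r^{(a)\tran}$. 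A block-diagonal symmetric matrix has spectral norm equal to the maximum of its block spectral norms, so $\lambda_{\max}(\mathcal{V}_a) = \max_{1\le r\le R}\lambda_{\max}(X_r^{(a)}\Sigma_a X_r^{(a)\tran})$, and symmetrically $\lambda_{\max}(\mathcal{V}_b) = \max_{1\le c\le C}\lambda_{\max}(X_c^{(b)}\Sigma_b X_c^{(b)\tran})$ in the column-block ordering. Since all three summands are positive semidefinite, Weyl's inequality gives $\lambda_{\max}(\mathcal{V}) \le \lambda_{\max}(\mathcal{V}_a) + \lambda_{\max}(\mathcal{V}_b) + \sse$, and the two hypotheses together with the constancy of $\sse$ yield $\lambda_{\max}(\mathcal{V})/N \to 0$.

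Finally I would combine this with the design condition. Using $0 \preceq \mathcal{V} \preceq \lambda_{\max}(\mathcal{V})\,I_N$ and congruence-monotonicity of the Loewner order, $M\mathcal{V}M^\tran \preceq \lambda_{\max}(\mathcal{V})\,M M^\tran = \lambda_{\max}(\mathcal{V})\,(X^\tran X)^{-1}$, whence
\begin{equation}
\|\cov(\hat\beta_{\ols}\mid X)\|_2 \le \frac{\lambda_{\max}(\mathcal{V})}{\lambda_{\min}(X^\tran X)} \le \frac{1}{c}\cdot\frac{\lambda_{\max}(\mathcal{V})}{N} \to 0,
\end{equation}
using $\lambda_{\min}(X^\tran X)\ge cN$, which closes the argument.

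The hard part will be the structural claim that the Hadamard product with $Z_a Z_a^\tran$ (resp.\ $Z_b Z_b^\tran$) renders $\mathcal{V}_a$ (resp.\ $\mathcal{V}_b$) block diagonal in the appropriate ordering, so that its extreme eigenvalue reduces to the worst per-block eigenvalue; verifying this carefully is what converts the opaque dense matrix $\mathcal{V}$ into the per-block quantities appearing in the hypotheses. Everything else --- unbiasedness, Weyl subadditivity, and the sandwich bound --- is routine once that reduction is in place. I would also note that the two orderings (row-block for $\mathcal{V}_a$, column-block for $\mathcal{V}_b$) are different permutations of the same index set, but since $\lambda_{\max}$ is permutation invariant, Weyl's inequality applies without having to align the two block structures in a single basis.
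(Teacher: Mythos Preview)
Your proposal is correct and follows essentially the same route as the paper: the paper also writes $\hat\beta_{\ols}-\beta=(X^\tran X)^{-1}X^\tran\epsilon$, decomposes $\mathcal{V}$ into the block-diagonal piece $B_R=\mathrm{diag}(X_r^{(a)}\Sigma_a X_r^{(a)\tran})$, the permuted block-diagonal piece $P_C B_C P_C^\tran$, and $\sse I_N$, and then bounds $w^\tran(X^\tran X)^{-1}X^\tran\mathcal{V}X(X^\tran X)^{-1}w$ by $(\lambda_{\max}(B_R)+\lambda_{\max}(B_C)+\sse)\|w\|^2/(cN)\to 0$. Your Hadamard-product justification of the block-diagonal structure and your explicit invocation of Weyl and permutation invariance are exactly the observations the paper uses implicitly, so the two arguments coincide.
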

\begin{restatable}{theorem}{mmconsis}
The method of moment estimates obtained using the above method is consistent if the fourth moments for $\mathbf{x}_{ij} , i \in \{1, \cdots, R\}, \, j \in \{1, \cdots, C\} $ are uniformly bounded and $\hat{\bbeta}_{\ols}$ is consistent.
\end{restatable}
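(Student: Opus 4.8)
The plan is to read the method-of-moments procedure as the solution of a linear system and then control how error propagates through that system. Writing $\theta=(\Sigma_{a,00},\dots,\Sigma_{a,pp},\Sigma_{b,00},\dots,\Sigma_{b,pp},\sse)^{\tran}$ for the $2p+3$ unknown (diagonal) variance parameters, the identity $(\ref{eq:simplify})$ shows that the conditional expectations on the left-hand sides of $(\ref{eq:mm1})$, $(\ref{eq:mm2})$ and $(\ref{eq:mm3})$ are \emph{linear} in $\theta$. Hence I can stack those left-hand sides as $M\theta$, where $M\in\mathbb{R}^{(2p+3)\times(2p+3)}$ is built entirely from the known $x_{ij}$ and $z_{ij}$, and stack the right-hand sides as a vector $T(\wh r)$ of quadratic statistics in the OLS residuals $\wh r_{ij}=Y_{ij}-x_{ij}^{\tran}\wh\beta_{\ols}$. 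The estimator then satisfies $M\wh\theta=T(\wh r)$, while the true parameter satisfies $M\theta_0=\Ex[T(r)\mid X]$ by construction (here $r_{ij}=Y_{ij}-x_{ij}^{\tran}\beta$ uses the true $\beta$). Subtracting gives the master identity
\begin{equation}
\wh\theta-\theta_0 = M^{-1}\big(T(\wh r)-\Ex[T(r)\mid X]\big),
\end{equation}
so consistency reduces to showing that $N^{-1}M$ tends to a nonsingular limit and that the right-hand error is $o_p(N)$.

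I would split the error into a plug-in part and a concentration part,
\begin{equation}
T(\wh r)-\Ex[T(r)\mid X] = \underbrace{\big(T(\wh r)-T(r)\big)}_{\text{(I)}} + \underbrace{\big(T(r)-\Ex[T(r)\mid X]\big)}_{\text{(II)}}.
\end{equation}
For part (I) I substitute $\wh r_{ij}=r_{ij}-x_{ij}^{\tran}\delta$ with $\delta=\wh\beta_{\ols}-\beta$ and expand each statistic. The denominators $\sum_j z_{ij}(x_{ij}^{(s)})^2$ do not involve $\beta$, so only the numerators change, and every increment is assembled from the cross term $\big(\sum_{ij}z_{ij}r_{ij}x_{ij}\big)^{\tran}\delta$ and the quadratic term $\sum_{ij}z_{ij}(x_{ij}^{\tran}\delta)^2$. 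The vector sum in the first has conditional mean zero and norm $O_p(\sqrt N)$, so the cross term is $o_p(N)$ after multiplication by $\delta=o_p(1)$; the second is at most $\|\delta\|^2\sum_{ij}z_{ij}\|x_{ij}\|^2=O_p(\|\delta\|^2N)=o_p(N)$. The uniformly bounded moments of $x_{ij}$ and the consistency $\delta=o_p(1)$ supplied by Theorem~\ref{betaolsconsistent} are exactly what these bounds require.

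Part (II) is the main obstacle. Conditional on $X$, $r$ is Gaussian with the dense crossed covariance $\mathcal V$ of $(\ref{eq:V})$, so each coordinate of $T(r)$ is a quadratic form $r^{\tran}Ar$ with conditional variance $2\tr\big((A\mathcal V)^2\big)$. For the row statistics $(\ref{eq:mm1})$ the matrix $A$ is block-diagonal across the $R$ client blocks, each block a within-client projection that annihilates the $x^{(s)}$ direction and hence the client effect $a_i$ along that coordinate; the surviving within-block variability is driven only by the item effects and the noise, while correlation between blocks is mediated entirely by the shared item effects $b_j$. I would bound $\tr\big((A\mathcal V)^2\big)$ by separating these within-block and between-block contributions and controlling each with the spectral hypotheses of Theorem~\ref{betaolsconsistent}: the row bound $\max_r\lambda_{\max}(X_r^{(a)}\Sigma_a X_r^{(a)\tran})/N\to0$ caps the client-effect energy that survives projection, and the analogous column bound caps the item-effect energy responsible for cross-block correlations. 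This yields $\mathrm{Var}(T(r)\mid X)=o(N^2)$, whence $\big(T(r)-\Ex[T(r)\mid X]\big)/N\to0$ in probability by Chebyshev, and the symmetric argument handles the column statistics $(\ref{eq:mm2})$ and the residual statistic $(\ref{eq:mm3})$. The delicacy here is that the subtracted projection terms are essential: without them $\tr((A\mathcal V)^2)$ is of order $N^2$ and no concentration follows.

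Finally I would control $M^{-1}$. The entries of $N^{-1}M$ are the bracketed fourth-order design averages displayed in $(\ref{eq:simplify})$; the uniform fourth-moment bound guarantees they converge to finite limits, and I would verify from a non-degeneracy condition on the limiting design that the limit $M_\infty$ is nonsingular, so that the coefficient patterns attached to $\Sigma_a$, $\Sigma_b$ and $\sse$ stay linearly independent. Then $(N^{-1}M)^{-1}=O_p(1)$ and
\begin{equation}
\wh\theta-\theta_0 = (N^{-1}M)^{-1}\cdot N^{-1}\big(T(\wh r)-\Ex[T(r)\mid X]\big)=O_p(1)\cdot o_p(1)\xrightarrow{p}0,
\end{equation}
which is the claimed consistency. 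The fourth-moment hypothesis thus enters in two places — ensuring the limiting moment matrix exists and is invertible, and controlling the plug-in expansion in part (I) — while the spectral conditions inherited from the OLS consistency theorem drive the concentration in part (II).
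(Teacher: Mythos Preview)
Your overall architecture matches the paper's: both cast the estimator as the solution of a linear system (the paper writes $A_n\theta_n=B_n$, you write $M\wh\theta=T(\wh r)$) and both split the error into a plug-in piece $T(\wh r)-T(r)$ handled via $\wh\beta_{\ols}-\beta=o_p(1)$ and a concentration piece $T(r)-\Ex[T(r)\mid X]$. The genuine divergence is in part~(II). The paper does \emph{not} compute $\mathrm{Var}(r^{\tran}(I-Q)r\mid X)=2\tr((A\mathcal V)^2)$ or invoke the block-spectral bounds of Theorem~\ref{betaolsconsistent}; instead it inserts the \emph{unconditional} expectation as a pivot and appeals to a strong law of large numbers for each of $N^{-1}\big(r^{\tran}(I-Q)r-\Ex[r^{\tran}(I-Q)r]\big)$ and $N^{-1}\big(\Ex[r^{\tran}(I-Q)r\mid X]-\Ex[r^{\tran}(I-Q)r]\big)$, using only the fourth-moment hypothesis. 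Your Chebyshev/trace route is more explicit about how the crossed dependence is controlled, but it purchases that explicitness by importing the spectral conditions $\max_r\lambda_{\max}(X_r^{(a)}\Sigma_a X_r^{(a)\tran})/N\to0$ etc., which are \emph{sufficient} for $\wh\beta_{\ols}$ to be consistent rather than consequences of it, so strictly you are using more than the theorem assumes. One smaller point: your claim that $\sum_{ij}z_{ij}r_{ij}x_{ij}$ has norm $O_p(\sqrt N)$ is too optimistic under the crossed covariance $\mathcal V$; the paper handles the analogous cross term $r^{\tran}(I-Q)X(\wh\beta_{\ols}-\beta)/N$ by the cruder bound $|r^{\tran}(I-Q)X_k|\le\|r\|\,\|X_k\|=O_p(N)$, which already suffices after multiplication by $\delta=o_p(1)$.
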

\subsection{ \texorpdfstring{$\Sigma_a$}{Sigmaa} and \texorpdfstring{$\Sigma_b$}{Sigmab} are unstructured}
If there are no constraints on the structure of $\Sigma_a$ and
$\Sigma_b$, the problem of estimating covariance parameters becomes
more complicated. In this case we would solve additional equations of the form
\begin{equation} \hat r^{\tran}(I-Q_r)(I-Q_s)\hat r = \Ex[r^{\tran}(I-Q_r)(I-Q_s)r]\end{equation}
for each $r, s \in \{0, \cdots, p\}$.

However, we need $\Sigma_a$ and $\Sigma_b$ to be positive
definite, and solving linear equations for $\Sigma_a$ and $\Sigma_b$
may not necessarily guarantee this without imposing additional constraints.
We cannot rely on consistency either, since we would need an
extremely large sample to guarantee a positive definite solution.

Rather than pursuing this approach, we developed a likelihood method using variational EM, which we discuss next.
\section{Likelihood approaches for estimating all the parameters}\label{sec.variational}
\subsection{Maximum likelihood estimation}
Here we attempt to maximize the likelihood function for $(\bbeta,
\Sigma_a, \Sigma_b, \sigma^2_e)$ in (\ref{eq:likelihood}),
\begin{equation}
\label{eq:mle}
\maximize_{\bbeta, \Sigma_a, \Sigma_b, \sigma^2_e} \frac{1}{(\sqrt{2\pi)^N|\mathcal{V}|}} \exp\left(-\frac{1}{2}(Y - X \bbeta )^{\tran} \mathcal{V}^{-1}(Y-X \bbeta)\right),
\end{equation}
with $\mathcal{V}$ defined as in (\ref{eq:V}). We showed in Section~\ref{sec.backfitting} that if the covariance parameters are known, we
can use backfitting to obtain $\hat\bbeta_{\mathrm{MLE}}$ in an
efficient way. The problem becomes much more complicated if the
covariance parameters are unknown, since the parameters $\Sigma_a$,
$\Sigma_b$, $\sigma_e^2$ are buried in $\mathcal{V}$ in a complex manner. The R package ``lme4'' \citep{lme4}  implements Gaussian
maximum likelihood estimation for the random effects model, but it is
computationally expensive for large problems. We will compare the
efficiency of our proposed algorithm to ``lme4'' in Section~\ref{sec:results}.
\subsection{Expectation Maximization Algorithm}
One potential solution to simplify the maximum likelihood problem
analytically is to use the EM algorithm. In our setup, EM would treat
the $\{\bsa_i\}_{i=1}^{R}$ and $\{\bsb_j\}_{j=1}^{C}$ as unobserved/missing
quantities. In this case, the complete log-likelihood is (ignoring constants)
\begin{equation}
\label{eq: complete_likelihood}
\begin{split}
l(Y;&\{\bsa_i\} ;\{\bsb_j\}) = - \frac{1}{2} \left(N \log(\sigma^2_e) + R \log (|\Sigma_a|)+ C \log (|\Sigma_b|)\right)\\
- \frac{1}{2} & \left(\sum_{i=1}^R  \sum_{j=1}^{C}\frac{z_{ij}(y_{ij} - \mathbf{x}_{ij} ^{\tran}(\bbeta + \bsa_i +\bsb_j))^2}{\sigma^2_e} + \sum_{i=1}^R \bsa_i^{\tran} \Sigma^{-1}_a \bsa_i + \sum_{j=1}^C \bsb_j^{\tran} \Sigma_b^{-1} \bsb_j\right).
\end{split}
\end{equation}
We have seen part of this expression earlier when we discussed
backfitting to obtain $\hat\bbeta_{\gls}$ in Section
\ref{thm:robinson}. \\
\\
EM is an iterative algorithm that involves two steps:
\begin{description}
\item[\bf E-step:] Compute the expected value of the complete
  log-likelihood conditional on $Y$ and the current parameters
  $\hat\btheta^{(k)}$, giving the function   $\ell_E(\btheta)$. 
\item[\bf M-step:] Maximize  $\ell_E(\btheta)$ to produce new parameter estimates, $\hat{\btheta}^{(k+1)}$.
\end{description}
We have used $\btheta$ to represent all the parameters $(\beta,
\Sigma_a, \Sigma_b, \sigma^2_e)$. The  E-step would require computing $\Ex[ \bsa_i \bsa^{\tran}_{i}| Y;\btheta]$, $\Ex[ \bsb_j \bsb^{\tran}_{j}| Y;\btheta]$, and
\begin{equation}
\label{eq:em_residual}
\Ex\left[\sum_{i=1}^R \sum_{j=1}^C z_{ij}\left(y_{ij}- \mathbf{x}^\tran_{ij}(\bbeta^{(k)} + \bsa_i + \bsb_j)\right)^2|Y;\btheta\right].
\end{equation}
Computing (\ref{eq:em_residual}) involves computing $\Ex[\bsa_i |
Y;\btheta],$ $\Ex[\bsb_j | Y;\btheta]$, and $\Ex[\bsa^{\tran}_i \bsb_j |
Y;\btheta]$ for $i \in \{0, \cdots, p\}$ and $ j \in \{0, \cdots,
p\}$. All is manageable except the terms $\Ex[\bsa^{\tran}_i \bsb_j |
Y;\btheta]$. Here, $(\bsa_1,\cdots, \bsa_R, \bsb_1,\cdots \bsb_C)$ follows a
multivariate normal distribution conditional on $Y$ and the covariance
matrix for the conditional distribution would only be obtained by
inverting a matrix of order $(R+C)p \times (R+C)p$.
The estimate of  $\sigma^2_e$ in the M-step requires the
computation(\ref{eq:em_residual}), and presents us with time
complexity of at least $\{(R+C)p\}^3$ which is greater than or equal to
$N^{3/2} p^3$.

Variational EM helps us out here. Instead of using the
conditional distribution detailed above, we consider conditional distributions for $(\bsa_1,\cdots, \bsa_R, \bsb_1,\cdots \bsb_C)$ under which $\{\bsa_i\}_{i=1}^R$ are independent of $\{\bsb_j\}_{j=1}^C$.
\subsection{Variational EM Algorithm}
Variational EM was introduced by~\cite{beal:zoubin:2003} and is
applicable when the missing data has a graphical model structure as in our
case, i.e., $\bsa_i$ and $\bsb_j$ are correlated conditional on
$Y$.  \cite{menictas2022streamlined} used streamlined
variational inference for the Bayesian crossed random effects
model. Our purpose in using variational EM is to estimate an
approximation to the maximum likelihood estimate in the frequentist
paradigm. The algorithm we develop here is a substitute for the
maximum likelihood estimate for the crossed random effects
model. Since we are unable to directly estimate the covariance parameters
efficiently, variational EM is a practical alternative.
Though variational EM does not come with the
guarantee of maximization of the likelihood, the estimates are found
to be empirically consistent in our simulations. Variational EM also
has a close connection with backfitting which we discuss further.
\subsubsection{Review of variational EM}

Let $\balpha$ denote the vector of unobserved quantities, $\btheta$
denote the set of parameters, and $l(\btheta)$ represent the likelihood
function of $\btheta$. In the E-step, we calculate the expectation of
the complete likelihood given $Y$, i.e., sufficient statistics of
$\balpha$ get replaced by their expectation under the distribution
$\Prob(\balpha|Y)$. Under factored variational EM, instead of finding
the expectation of the complete likelihood given $Y$, we substitute the conditional distribution of $\balpha$ given $Y$ with $q(\balpha)$ belonging in a restricted set $Q$ of probability distributions.\\
We define
\begin{equation} F(q,\btheta) : = \int q(\balpha)\log \frac{\Prob(Y,\balpha|\btheta)}{q(\balpha)} d\balpha \end{equation}
which we view as an approximation to $l(\btheta)$; if
  $q(\balpha)=\Prob(\balpha|Y,\btheta)$ then  $F(q,\btheta)=l(\btheta)$.
In more detail,
\begin{align*}
F(q,\btheta) & = \int q(\balpha)\log \frac{\Prob(Y,\balpha|\btheta)}{q(\balpha)} d \balpha\\
& = \int q(\balpha)\log \frac{ \Prob(Y|\btheta)\Prob(\balpha|Y,\btheta)}{q(\balpha)} d \balpha\\
& = \int q(\balpha) \left\{\log \Prob(Y|\btheta) + \log \left( \frac{\Prob(\balpha|Y,\btheta)}{q(\balpha)}\right) \right\} d \balpha\\
& = l(\btheta) - \mathrm{KL} \left(q(\balpha)\| \Prob(\balpha|Y,\btheta)\right).
\end{align*}
As, $\mathrm{KL} \left(q(\balpha)\| \Prob(\balpha|Y,\btheta)\right) \geq 0$, $F(q,\btheta) \leq l(\btheta)$ with equality iff $q(\balpha) = \Prob(\balpha|Y,\btheta)$. At each variational E-step, we maximize $F(q,\btheta) $ with respect to $q$ within the set $Q$. This implies that variational EM maximizes a lower bound to the likelihood.
If $Q = \{q: q(\balpha) = \prod q_u(\balpha_u)\}$, it follows from~\cite{beal:zoubin:2003} that for the exponential family, optimal update for distribution of each component of $\balpha$ is given by
\begin{equation}
q_u^{(k)}(\balpha_u) \propto \Prob \left(Y, \balpha_u| \balpha_{-u} = \Ex_{q^{(k-1)}} (\balpha_{-u})\right).
\end{equation}
Variational EM possesses the property that
\begin{equation}F(q^{(k-1)},\btheta^{(k-1)}) \leq F(q^{(k)},\btheta^{(k-1)}) \leq F(q^{(k)},\btheta^{(k)}) . \end{equation}
\subsubsection{Variational 	EM for the crossed random effects model with random slopes}\label{sec:vari-em-cross}
For the crossed random effects model (\ref{eq:random_slopes}), $\balpha$ would have two components, $\{\bsa_i\}_{i=1}^R$ and $\{\bsb_j\}_{j=1}^C$. Here, $\btheta$ represents the set of parameters ($\bbeta, \Sigma_a, \Sigma_b, \sigma^2_e$). We choose
\begin{equation}Q = \left\{ q: q(\bsa_1,\cdots, \bsa_R, \bsb_1,\cdots , \bsb_C) = q_a(\bsa_1, \cdots, \bsa_R) q_b(\bsb_1, \cdots, \bsb_C)\right\} .\end{equation}
\begin{restatable}{lemma}{varE}
\label{lemma:var_E}
The optimal distributions for $\{\bsa_i\}_{i=1}^R$ and $\{\bsb_j\}_{i=1}^R$ in the variational E-step are given by \[ q^{(k)}_a(\bsa_1, \cdots, \bsa_R) = \prod_{i=1}^R q^{(k)}(\bsa_i) \quad \mbox{ and } \quad q^{(k)}_b(\bsb_1, \cdots, \bsb_C) = \prod_{j=1}^C q^{(k)}(\bsb_j) \]
with
\begin{equation}q^{(k)}(\bsa_i) = N(\mu^{(k)}_{a,i}, \Sigma^{(k)}_{a,i}) \quad \mbox{and} \quad q^{(k)}(\bsb_j) = N(\mu^{(k)}_{b,j},\Sigma^{(k)}_{b,j}) , \end{equation}
where
\begin{equation}
\label{eq:var_mu_a}
\mu^{(k)}_{a,i} = \Sigma^{(k)}_{a,i} \left(\frac{\sum_{j=1}^{C}z_{ij}\big(y_{ij} - \mathbf{x}_{ij} ^{\tran}\big(\bbeta^{(k-1)} +\mu^{(k-1)}_{b,j}\big)\big)\mathbf{x}_{ij} }{\left(\sigma^{(k-1)}_e\right)^{2}}\right)
\end{equation}
and
\begin{equation}
\label{eq:var_sigma_a}
\Sigma^{(k)}_{a,i} = \left((\Sigma^{(k-1)}_a)^{-1} +\frac{\sum_{j=1}^{C}z_{ij}\mathbf{x}_{ij}  \mathbf{x}_{ij} ^{\tran}}{\left(\sigma^{(k-1)}_e\right)^2}\right)^{-1}
\end{equation}
with $\mu^{(k)}_{b,j}$ and $\Sigma^{(k)}_{b,j}$ having similar definitions.
\end{restatable}
\begin{proof}
The proof can be found in Appendix~\ref{sec:proof-lemma-refl}.
\end{proof}

It is important to note that while updating the distribution of
$\{\bsa_i\}_{i=1}^R$, we use the parameters of the distribution of
$\{\bsb_j\}_{j=1}^C$ updated in the previous step. Likewise when updating the distribution of $\{\bsb_j\}_{j=1}^C$, we use parameters of the distribution of $\{\bsa_i\}_{i=1}^R$ recently updated. Thus, the mechanism of the update is very similar to the backfitting approach where we iterate through updates in $\{\bsa_i\}_{i=1}^R$ and $\{\bsb_j\}_{j=1}^C$.
Each of the quantities in (\ref{eq:var_mu_a}) and (\ref{eq:var_sigma_a}) can be computed in $\mathcal{O}(N)$ as computing $\mu^{(k)}_{a,i}$ and $\mu^{(k)}_{b,j}$ is equivalent to obtaining the estimates of $\{\bsa_i\}_{i=1}^{R}$ and $\{\bsb_j\}_{j=1}^{C}$ in each step of backfitting. \\
\begin{description}
\item[\bf Variational E step:] The expectation of the complete likelihood (\ref{eq: complete_likelihood}) can be computed with respect to the probability distribution $q^{(k)}$, though we can skip to the M-step and fill in the details there.\\\
\item[\bf M step:] The estimate
$\Sigma_a^{(k)}$ is given by
\begin{equation}
\label{eq:Sigma_a_k}
\begin{split}
\Sigma_a^{(k)} : & = \frac{1}{R}\Ex_{q^{(k)}}\left[ \sum_{i=1}^R \bsa_i \bsa_i^{\tran}\right] \\
& = \frac{1}{R}\sum_{i=1}^R \left(\mu^{(k)}_{a,i} {\mu^{(k)}_{a,i}}^{\tran} + \Sigma^{(k)}_{a,i}\right)
\end{split}
\end{equation}
Similarly,
\begin{equation}
\label{eq:Sigma_b_k}
\Sigma_b^{(k)} = \frac{1 }{C}\sum_{j=1}^C\left(\mu^{(k)}_{b,j} {\mu^{(k)}_{b,j}}^{\tran} + \Sigma^{(k)}_{b,j}\right). \end{equation}
Also,
\begin{equation}
\label{eq:beta_k}
\bbeta^{(k)} = \bigg(\sum_{i=1}^R\sum_{j=1}^{C}z_{ij}\mathbf{x}_{ij}  \mathbf{x}_{ij} ^{\tran}\bigg)^{-1}\sum_{i=1}^{R}\sum_{j=1}^{C}z_{ij}\mathbf{x}_{ij} \Ex\left(y_{ij} - \mathbf{x}_{ij} ^{\tran}\left(\mu^{(k)}_{a,i} + \mu^{(k)}_{b,j}\right)\right), \end{equation}
and finally
\begin{align*}
\left(\sigma^{(k)}_e\right)^2 & = \frac{1}{N}\sum_{i=1}^R \sum_{j=1}^C z_{ij} \Ex_{q^{(k)}}\left[ \left(y_{ij}- \mathbf{x}^\tran_{ij}(\bbeta^{(k)} + \bsa_i + \bsb_j)\right)^2\right]\\
& = \frac{1}{N}\sum_{i=1}^R \sum_{j=1}^C z_{ij}\left(\Ex_{q^{(k)}}\left[\left(y_{ij}- \mathbf{x}^\tran_{ij}(\bbeta^{(k)} + \bsa_i + \bsb_j)\right)\right]\right)^2\\
& \hspace{1.5cm}+ \frac{1}{N}\sum_{i=1}^R \sum_{j=1}^C z_{ij} \textit{Var}_{q^{(k)}}\left[\left(y_{ij}- \mathbf{x}^\tran_{ij}(\bbeta^{(k)} + \bsa_i + \bsb_j)\right)\right]\\
& = \frac{1}{N}\sum_{i=1}^R \sum_{j=1}^C z_{ij}\left(\Ex_{q^{(k)}}\left[\left(y_{ij}- \mathbf{x}^\tran_{ij}(\bbeta^{(k)} + \bsa_i + \bsb_j)\right)\right]\right)^2\\
& \hspace{1.5cm}+ \frac{1}{N}\sum_{i=1}^R \sum_{j=1}^C z_{ij} \textit{Var}_{q^{(k)}}\left[ \mathbf{x}^\tran_{ij}( \bsa_i + \bsb_j)\right]\\
& = \frac{1}{N}\sum_{i=1}^R \sum_{j=1}^C z_{ij}\left(y_{ij}- \mathbf{x}^\tran_{ij}(\bbeta^{(k)} + \mu^{(k)}_{a,i} + \mu^{(k)}_{b,j})\right)^2\\
&\hspace{4cm} + \frac{1}{N}\sum_{i=1}^R \sum_{j=1}^C z_{ij} \mathbf{x}^\tran_{ij} \left(\Sigma^{(k)}_{a,i} + \Sigma^{(k)}_{b,j}\right)\mathbf{x}_{ij} .
\end{align*}
\end{description}
The computational cost of the variational E-step and M-step is $\mathcal{O}(Np^2)$.
\subsection{Connection of variational EM with backfitting}
If the covariance parameters were known or otherwise previously estimated, the goal of variational EM would be just to estimate $\bbeta$. In that case, optimal distribution for $\{\bsa_i\}_{i=1}^R$ and $\{\bsb_j\}_{i=1}^R$ in the variational E-step are given by \begin{equation}q^{(k)}(\bsa_i) = N(\mu^{(k)}_{a,i}, \Sigma_{a,i}) \quad \mbox{and} \quad q^{(k)}(\bsb_j) = N(\mu^{(k)}_{b,j},\Sigma_{b,j}), \end{equation}
where
\begin{equation}
\label{eq:var_mu_a_fixed}
\mu^{(k)}_{a,i} = \Sigma_{a,i} \left(\frac{\sum_{j=1}^{C}z_{ij}\big(y_{ij} - \mathbf{x}_{ij} ^{\tran}\big(\bbeta +\mu^{(k-1)}_{b,j}\big)\big)\mathbf{x}_{ij} }{\sigma_e^{2}}\right)
\end{equation}
and
\begin{equation}
\label{eq:var_sigma_a_fixed}
\Sigma_{a,i} = \left(\Sigma_a^{-1} +\frac{\sum_{j=1}^{C}z_{ij}\mathbf{x}_{ij}  \mathbf{x}_{ij} ^{\tran}}{\sigma_e^2}\right)^{-1}
\end{equation}
with $\mu^{(k)}_{b,j}$ and $\Sigma_{b,j}$ having similar definitions. It is important to note that the variance of $\{\bsa_i\}$ and $\{\bsb_j\}$ are now fixed over the iterations, because the covariance parameters, $\Sigma_a$, $\Sigma_b$, and $\sigma^2_e$ are assumed to be known and fixed quantities. Equations (\ref{eq:var_mu_a_fixed}) and (\ref{eq:var_sigma_a_fixed}) imply that
\begin{equation}
\label{eq:a_i_em}
\mu^{(k)}_{a,i} = \left(\sigma^2_e\Sigma_a^{-1} +\sum_{j=1}^{C}z_{ij}\mathbf{x}_{ij}  \mathbf{x}_{ij} ^{\tran}\right)^{-1} \left(\sum_{j=1}^{C}z_{ij}\big(y_{ij} - \mathbf{x}_{ij} ^{\tran}\big(\bbeta +\mu^{(k-1)}_{b,j}\big)\big)\mathbf{x}_{ij} \right),
\end{equation}
and similarly,
\begin{equation}
\label{eq:b_j_em}
\mu^{(k)}_{b,j} = \left(\sigma^2_e\Sigma_b^{-1} +\sum_{i=1}^{R}z_{ij}\mathbf{x}_{ij}  \mathbf{x}_{ij} ^{\tran}\right)^{-1} \left(\sum_{i=1}^{R}z_{ij}\big(y_{ij} - \mathbf{x}_{ij} ^{\tran}\big(\bbeta +\mu^{(k)}_{a,i}\big)\big)\mathbf{x}_{ij} \right).
\end{equation}
The estimate of $\bbeta$ in E-step remains unchanged and is given by
\begin{equation}
\label{eq:beta_em}
\bbeta^{(k)} = \bigg(\sum_{i=1}^R\sum_{j=1}^{C}z_{ij}\mathbf{x}_{ij}  \mathbf{x}_{ij} ^{\tran}\bigg)^{-1}\sum_{i=1}^{R}\sum_{j=1}^{C}z_{ij}\mathbf{x}_{ij} \Ex\left(y_{ij} - \mathbf{x}_{ij} ^{\tran}\left(\mu^{(k)}_{a,i} + \mu^{(k)}_{b,j}\right)\right). \end{equation}

It is easy to see that update of $\mu^{(k)}_{a,i}$, $\mu^{(k)}_{b,j}$, and $\hat \bbeta$ in each step of variational EM (\ref{eq:a_i_em}) , (\ref{eq:b_j_em}) with known covariance parameters is same as update of $\hat \bsa_{i}$, $\hat\bsb_j$ and $\bbeta$ in each step of backfitting . This draws a close connection between variational EM and backfitting. We know that $\hat\bsa_i \to \Ex[\bsa_i|Y]$ and $\hat\bsb_j \to \Ex[\bsb_j|Y]$ over the iterations in backfitting which provides a plausible justification for variational EM.

 \subsection{Estimation of covariance of \texorpdfstring{$\hat{\bbeta}$}{beta}}
Let $S_G$ denotes the linear operator corresponding to the fit due to random components $\{\bsa_i\}$ and $\{\bsb_j\}$, i.e., $S_G(Y)_{ij} = \mathbf{x}_{ij} ^\tran (\hat \bsa_i +\hat \bsb_j)$, then similar to the development in Section $\ref{sec.clubbing}$, we can write
\begin{equation}\hat{\bbeta} = (X^{\tran} (I- S_G) X)^{-1} (X^{\tran} (I-S_G) Y). \end{equation}

It is important to note here that $S_G$ is a \emph{symmetric} operator as it is an operator matrix for a generalised linear regression. Hence we can write
\begin{equation}\hat{\bbeta} = \left(X^{\tran} (X - S_G X)\right)^{-1} \left((X - S_G X)^\tran Y\right). \end{equation}
Another important thing to mention here is that we never compute or store the operator $S_G$, instead we apply the backfitting procedure similar to Section $\ref{sec.clubbing}$ to obtain an estimate of $S_G(X_j)$ given by $\tilde{S}_G(X_j)$ for each column $X_j$ of $X$.

Let $\tilde{X} = (I - \tilde{S}_G) X$, then our estimate for the covariance matrix of $\hat{\bbeta}$ is
\begin{equation}(X^{\tran} \tilde{X})^{-1} \tilde{X}^{\tran} \widehat{\mathcal{V}} \tilde{X} (\tilde X^{\tran} X)^{-1}, \end{equation}
where $\widehat{\mathcal{V}}$ is an estimate for the covariance matrix of $Y$.  Using the definition of $\widehat{\mathcal{V}}$ in $(\ref{eq:V}$), $\cov(\hat{\bbeta})$ is given by
\begin{equation}
\label{eq:sd_beta_hat}
(X^{\tran} \tilde{X})^{-1} \tilde{X}^{\tran} \left( X \hat\Sigma_a X^\tran \bullet Z_a Z_a^\tran + X \hat\Sigma_b X^\tran \bullet Z_b Z_b^\tran  + \hat \sigma^2_e I_N \right)  \tilde{X}  (\tilde X^{\tran} X)^{-1}. \end{equation}

If we directly implement the matrix multiplication above, the operation cannot be performed in time complexity which is linear in $N$. However, exploiting the structure of the expression, a careful implementation brings the complexity down to linear in $N$. The details can be found in Appendix~\ref{sec.covariance_of_beta_hat}. The covariance matrix of $\hat{\bbeta}$ can be used to perform inference on $\bbeta$ and z-tests.
\section{Results}\label{sec:results}
We compare the time required for maximum likelihood estimation
by ``lmer'' \cite{lme4} with our methods. We also illustrate the empirical consistency of our methods using simulations. Finally, we implement our method on MovieLens data and Stitch Fix data, and compare the results with the OLS fit and the crossed random effects model with random intercepts implemented by ~\cite{ghos:hast:owen:2021}.
\subsection{Simulations}
\subsubsection{Error in estimating covariance matrices}\label{error_sigma_a_b}
To articulate the errors in estimating $\Sigma_a$ and $\Sigma_b$, we are using two metrics.
\begin{itemize}
\item \textbf{KL divergence}:
We define the distance between $\Sigma_a$ and $\hat \Sigma_a$ in the following way. Let $\Prob_{1}$ denote the normal distribution with mean zero and covariance $\Sigma_a$ and $\Prob_{2}$ denote the normal distribution with mean zero and covariance $\wh\Sigma_a$, then we define the distance by
\begin{equation}\Ex_{\Prob_1}\bigg[\log \bigg(\frac{d\Prob_{1}}{d\Prob_{2}}\bigg) \bigg] = \frac{1}{2}\Bigg[ \tr\bigg(\wh\Sigma_a^{-1}\Sigma_a\bigg) -\log(|\wh\Sigma_a^{-1}\Sigma_a|) -d\Bigg]. \end{equation}
\item \textbf{Frobenius norm of difference between $\Sigma_a$ and $\wh{\Sigma}_a$}:
\[\|\Sigma_a-\wh{\Sigma}_a\|_F\]
\end{itemize}
For the simulations, we consider the model given by
  (\ref{eq:random_slopes}) with random slopes for all the covariates
  for clients as well as items. However, as mentioned in \cref{remark:randomslopes}, in practice we typically
  include  random slopes for only a subset of covariates for both clients and items. We present the results for the general setup in Appendix~\ref{sec.generalized_results}.  

\subsubsection{Diagonal case}\label{subsec.diagonal}
For our simulations, we choose $p = 3, \beta_0 = 0.1$, $\bbeta = (0.2, 0.3, 0.4)$, and $\sigma^2_e = 1$. We considered \[\Sigma_a = \begin{bmatrix}
0.3 & 0 & 0 & 0\\
0 & 0.3 & 0 & 0\\
0 & 0 & 0.3 & 0\\
0 & 0 & 0 & 0.3\\
\end{bmatrix}
\mbox{  and  }\Sigma_b = \begin{bmatrix}
0.1 & 0 & 0 & 0\\
0 & 0.1 & 0 & 0\\
0 & 0 & 0.1 & 0\\
0 & 0 & 0 & 0.1\\
\end{bmatrix}.\]
We choose $R = C = N^{0.6}$ for our simulations which satisfies the
sparsity assumption as seen in practice/e-commerce data. For each
value of $N$, we perform 100 simulations and the plot the average
values.
We have proposed the following algorithms:
\begin{description}
\item \textbf{Backfitting}: We estimate covariance parameters using the method of moments (MoM) and then perform backfitting to estimate the fixed parameter.
\item \textbf{Clubbed Backfitting}: We estimate covariance parameters using MoM and then perform backfitting with clubbing to estimate the fixed parameter.
\item \textbf{Variational EM}: We use MoM to get an initial estimate of covariance parameters and then we improve the estimates in each iteration using the variational M-step along with estimating fixed parameters.
\item \textbf{Clubbed variational EM}: Similar to variational EM, we start with MoM estimates for covariance parameters and improvise them using variational EM. We apply ``clubbing" trick while estimating $\bbeta$ to fasten the convergence.
\end{description}

\begin{figure}
\centering
\includegraphics[width=\linewidth]{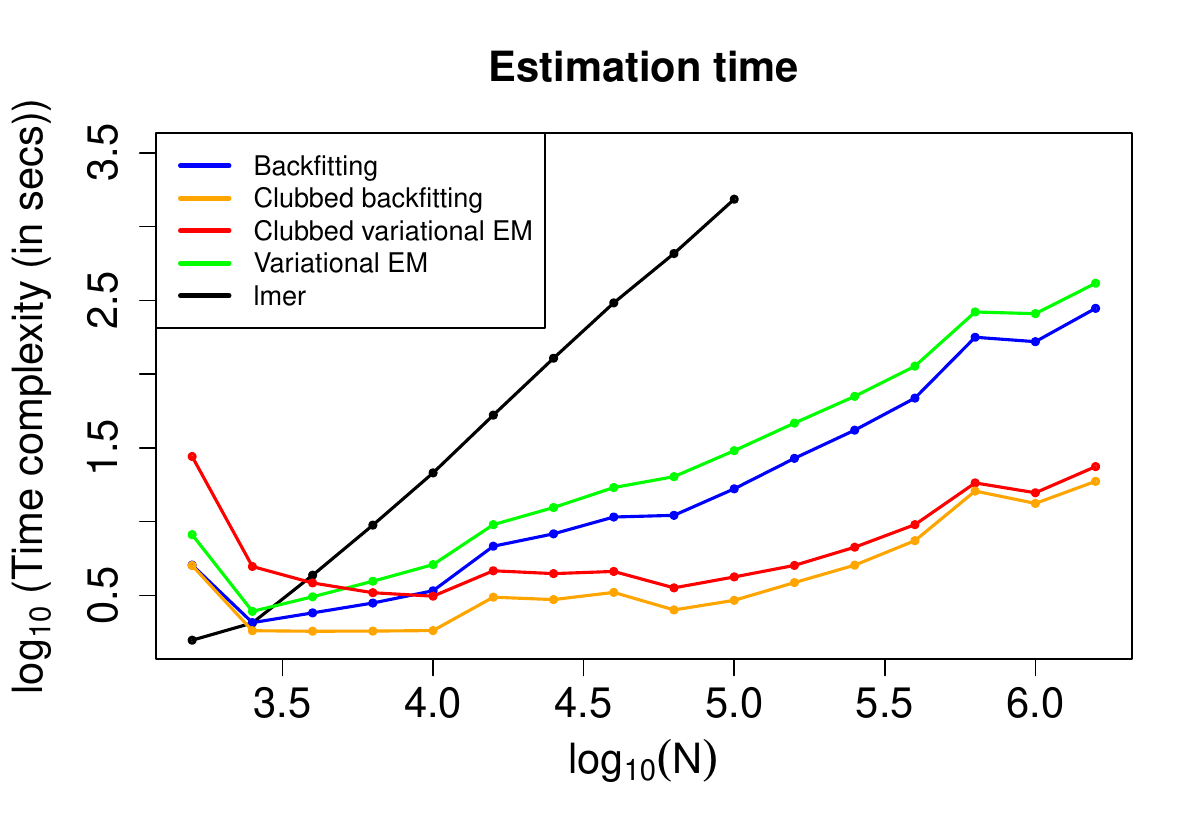}
\caption{Diagonal case: Comparison of estimation time and number of iterations by backfitting, clubbed backfitting, variational EM, and clubbed variational EM (four proposed algorithms) and ``lmer"(existing algorithm). The plot for lmer is discontinued after $N = 10^{5}$ due to the large computational time for ``lmer".}
\label{fig:diag_estimation_time}
\end{figure}

{\textbf{Summary of estimation time: }The slope of a curve in Figure~\ref{fig:diag_estimation_time} represents the exponent of $N$ in time complexity of the associated algorithm. Therefore, the time complexity of ``lmer" is $\mathcal{O}(N^{3/2})$ or worse. On the other hand, the time complexity of all of our suggested algorithms is linear in $N$. The reason that the algorithms with the ``clubbing" trick seem to have time complexity with much smaller exponent in $N$ is that the number of iterations required for those algorithms to converge decreases with $N$. Therefore, although time complexity of each step is linear in $N$, the total time complexity seems to be smaller than $\mathcal{O}(N)$. For $N = 10^5 = 100, 000$, it takes around three days (on the Stanford Sherlock cluster even with parallel processing using ``mclapply" function in R) to perform 100 simulations using ``lmer" as the fitting  function. Therefore, in the interest of time, we discontinue using ``lmer" for values of $N$ larger than 100, 000.\\

Figure~\ref{fig:diag_sigma_ab} depicts empirical consistency of $\wh{\Sigma}_a$ and $\wh{\Sigma}_b$ estimated through MoM and variational EM as errors explained in $\ref{error_sigma_a_b}$ go to zero as $N$ increases. Figure~$\ref{fig:diag_beta_and_sigma2e}$ shows that the accuracy of different estimates of $\bbeta$, i.e., $\hat{\bbeta}_{\mathrm{GLS}}$ obtained though different approaches and $\hat{\bbeta}_{\mathrm{MLE}}$ coincide, i.e., errors are very close to each other and thus $\hat{\bbeta}_{\mathrm{GLS}}$ is consistent. Figure~$\ref{fig:diag_beta_and_sigma2e}$ also depicts the empirical consistency of the estimate of $\sigma^2_e$ obtained through the method of moments approach and variational EM approach. We see that estimates obtained through MoM approach are consistent but variational EM helps in improving those estimates. The errors made in estimating all the parameters get close to maximum likelihood estimates as $N$ increases. 
\begin{figure}
\centering
\includegraphics[width=\linewidth]{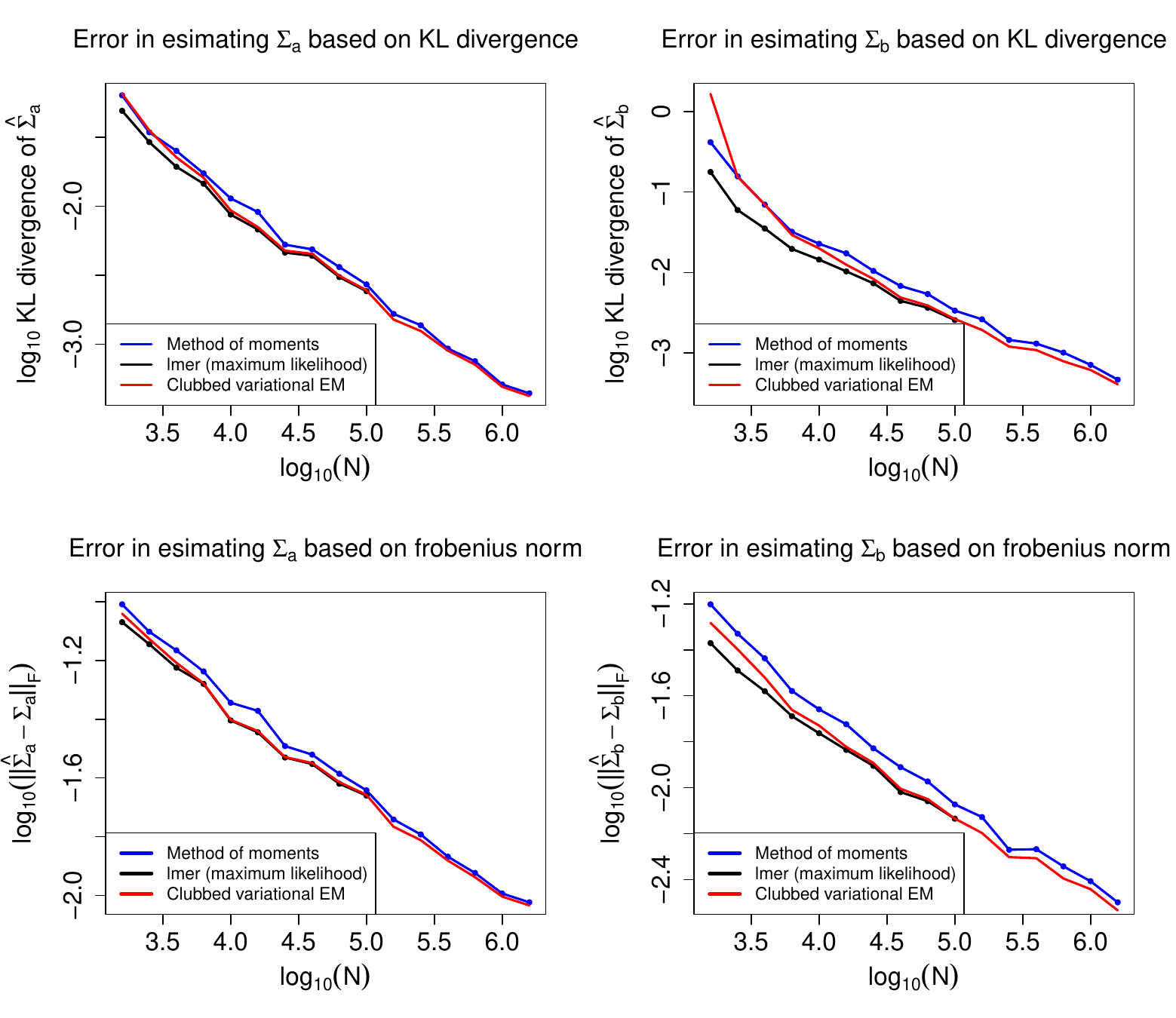}
\caption{Diagonal case: Comparison of accuracy in estimating ${\Sigma}_a$ and ${\Sigma}_b$ based on KL divergence and Frobenius norm by the method of moments (proposed algorithm) and lmer(existing algorithm). The plot for lmer is discontinued after $N = 10^{5}$ due to the large computational time for ``lmer".}
\label{fig:diag_sigma_ab}
\end{figure}
\begin{figure}
\includegraphics[width=\linewidth]{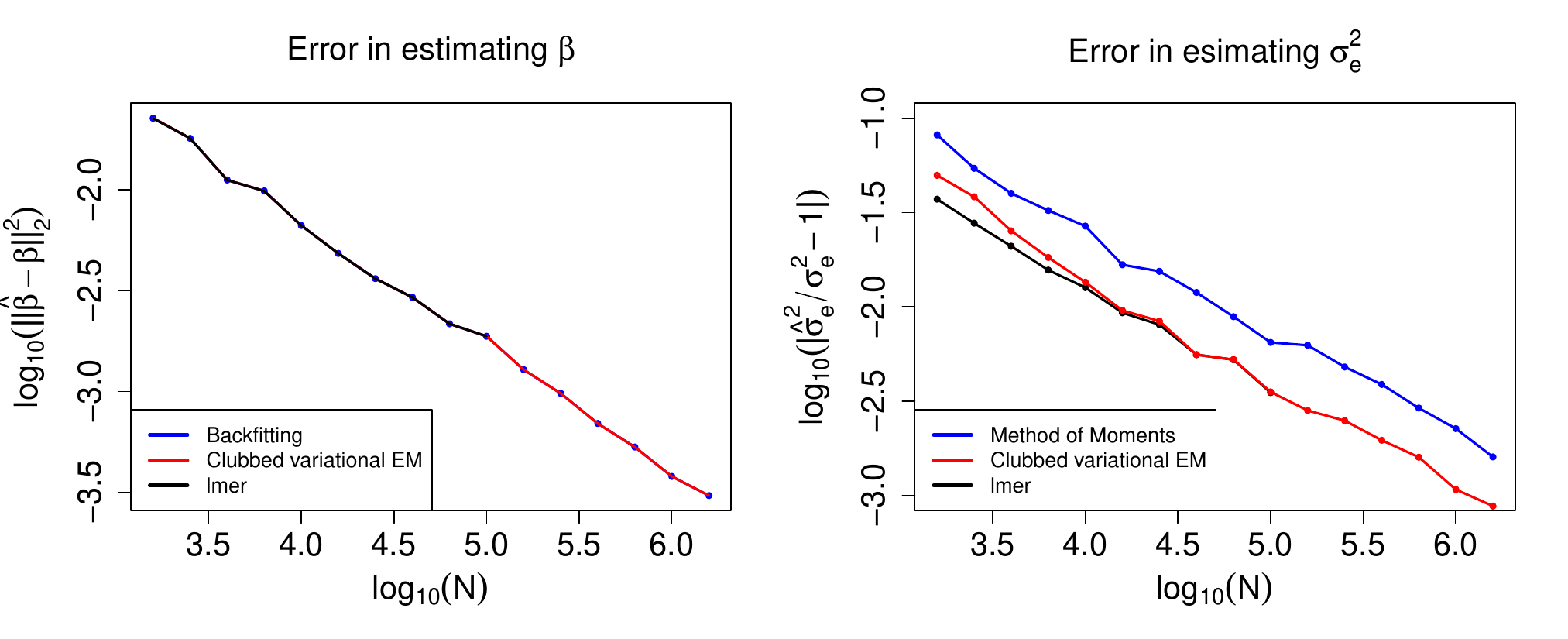}
\caption{Diagonal case: Comparison of accuracy in estimating $\bbeta$ and $\sigma^2_e$ by method of moments (proposed algorithm) and lmer (existing algorithm). The plot for lmer is discontinued after $N = 10^{5}$ due to the large computational time for ``lmer". The errors in estimating $\bbeta$ are close to each other for the illustrated algorithms, which leads the lines in the plot to coincide.}
\label{fig:diag_beta_and_sigma2e}
\end{figure}
\subsubsection{Non diagonal case}
For the non-diagonal case, we also choose $p = 3, \bbeta = (0.2, 0.3, 0.4), \beta_0 = 0.1, \sigma^2_e = 1$. We considered \[\Sigma_a = \Sigma_b = \begin{bmatrix}
1 & 0.2 & 0.2 & 0.2\\
0.2 & 1 & 0.2 & 0.2\\
0.2 & 0.2 & 1 & 0.2\\
0.2 & 0.2 & 0.2 & 1 \\
\end{bmatrix} . \]
We choose $R = C = N^{0.6}$ for our simulations and simulated 100 examples similar to the diagonal case.
\begin{figure}
\centering
\includegraphics[width=\linewidth]{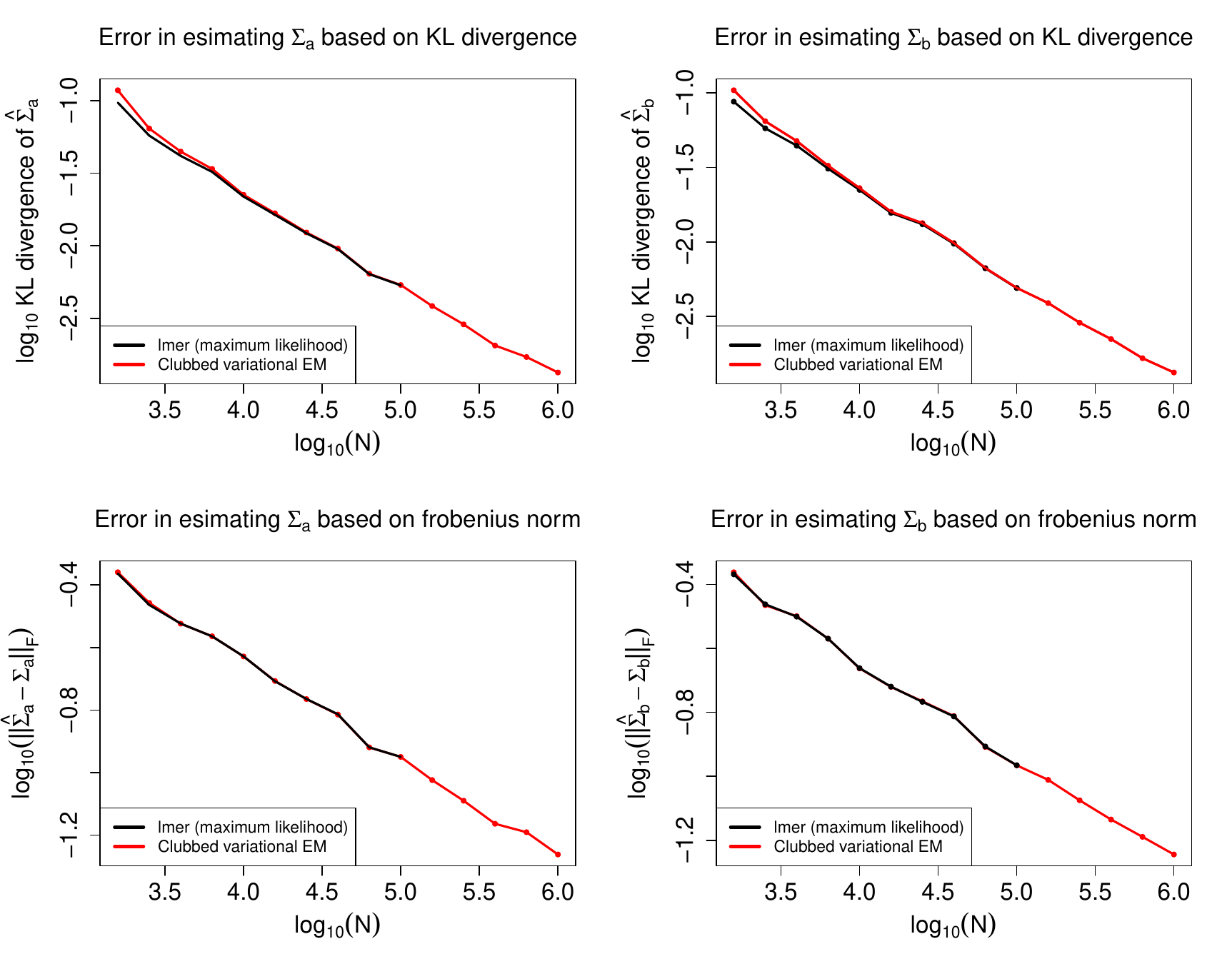}
\caption{Non-diagonal case: Comparison of accuracy in estimating ${\Sigma}_a$ and ${\Sigma}_b$ based on KL divergence and Frobenius norm by variational EM (proposed algorithm) and lmer(existing algorithm). The plot for lmer is discontinued after $N = 10^{5}$ due to large computational time for ``lmer".}
\label{fig:sigma_ab}
\end{figure}
Figure~\ref{fig:sigma_ab} depicts empirical consistency of $\wh{\Sigma}_a$ and $\wh{\Sigma}_b$ estimated through variational EM as errors explained in $\ref{error_sigma_a_b}$ go to zero as $N$ increases. Figure~$\ref{fig:beta_and_sigma2e}$ shows that the accuracy of different estimates of $\bbeta$, i.e., $\hat{\bbeta}_{\mathrm{GLS}}$ obtained though different approaches and $\hat{\bbeta}_{\mathrm{MLE}}$ coincide and thus $\hat{\bbeta}_{\mathrm{GLS}}$ is consistent. The estimates of $\sigma^2_e$ obtained through MoM approach and variational EM depict empirical consistency in figure~$\ref{fig:beta_and_sigma2e}$ although the estimate obtained through variational EM has a faster rate of convergence. We don't plot the error in estimating covariance matrices by MoM approach as method of moments is not fruitful in estimating covariances matrices of random slopes when they aren't diagonal. Figure~$\ref{fig:estimation_time}$ shows that our variational EM and backfitting approach take a great lead compared to the ``lmer" approach. The average time complexity of ``lmer" approach is at least $\mathcal{O}(N^{1.5})$ in our setup, while all the approach discussed by us have time complexity $\mathcal{O}(N)$. The number of iterations required to converge grows with $N$ for ``lmer" and the approaches without clubbing, while the number of iterations decreases with $N$ for clubbing-based approaches. Here is a summary of the results:
\begin{table}[ht]
\centering
\begin{tabular}{|l|l|l|}
\hline
& Without Clubbing & With Clubbing\\
\hline
Backfitting & 
Slow convergence & 
Fast convergence\\
with MoM & 
Poor estimates for $\Sigma_a, \Sigma_b$ & 
Poor estimates for $\Sigma_a, \Sigma_b$\\
\hline
Variational EM & 
Slow convergence & 
Fast convergence\\
& 
Good estimates for $\Sigma_a, \Sigma_b$ & 
Good estimates for $\Sigma_a, \Sigma_b$\\
\hline
\end{tabular}
\end{table}
\begin{figure}
\centering
\includegraphics[width=\linewidth]{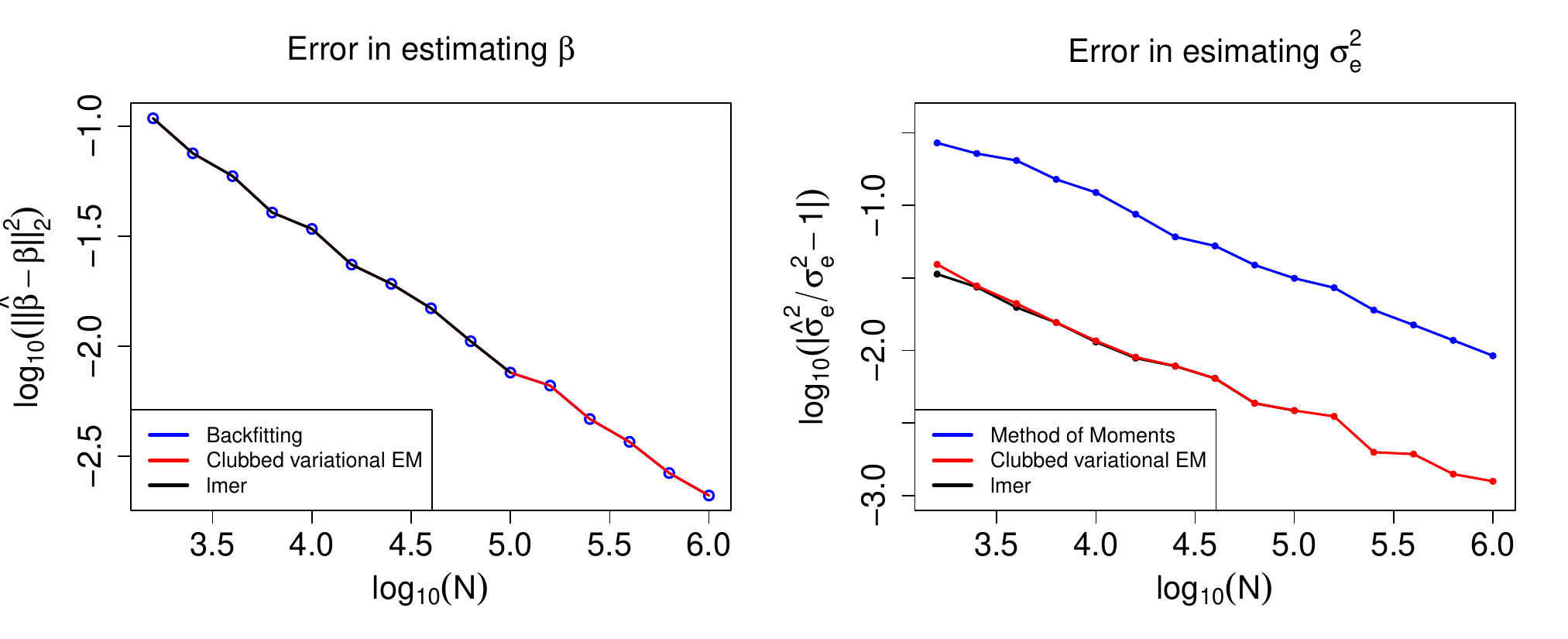}
\caption{Non diagonal case: Comparison of accuracy in estimating $\bbeta$ and $\sigma^2_e$ by variational EM (proposed algorithm) and lmer(existing algorithm). The plot for lmer is discontinued after $N = 10^{5}$ due to the large computational time for ``lmer". The errors in estimating $\bbeta$ are close to each other for the illustrated algorithms, which leads the lines in the plot to coincide.}
\label{fig:beta_and_sigma2e}
\end{figure}
\begin{figure}[ht]
\centering
\includegraphics[scale=0.55]{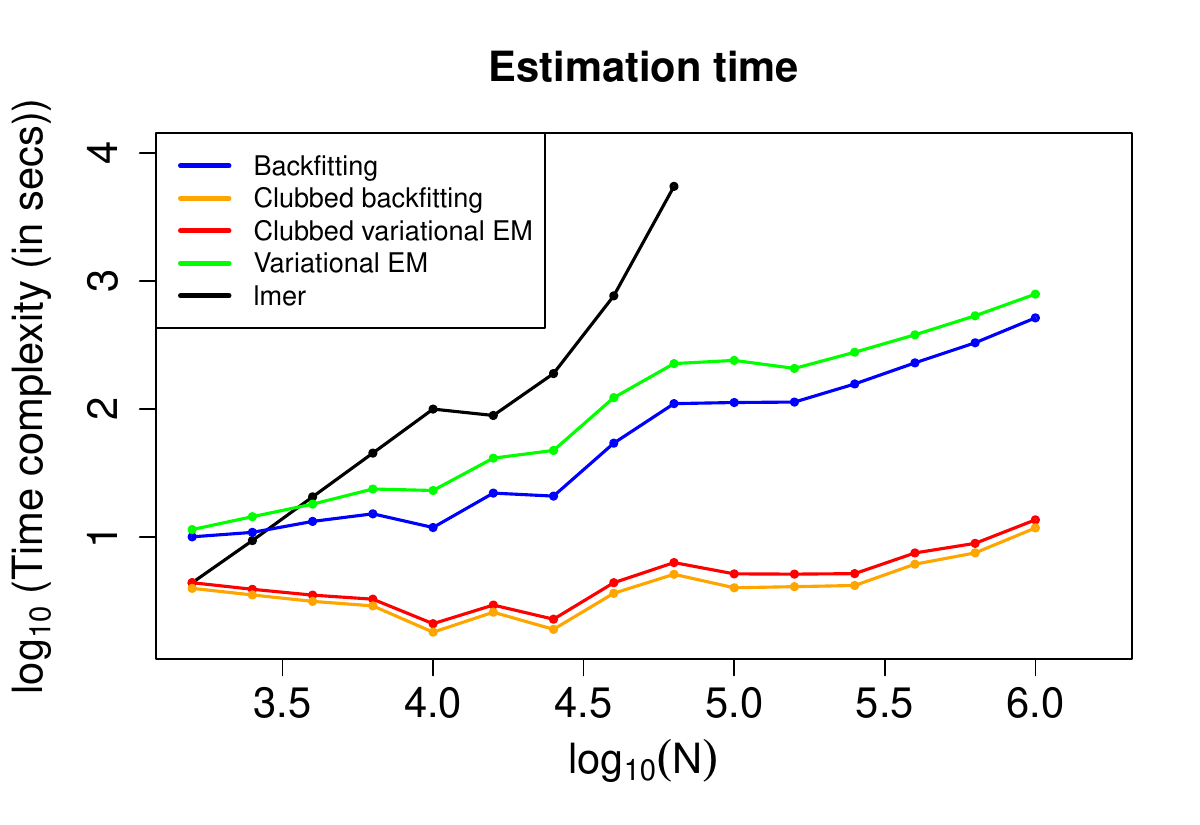}
\caption{Non-diagonal case: Comparison of estimation time by backfitting, clubbed backfitting, variational EM, and clubbed variational EM (four proposed algorithms) and ``lmer"(existing algorithm).}
\label{fig:estimation_time}
\end{figure}

\begin{figure}[!ht]
\centering
\includegraphics[scale=0.6]{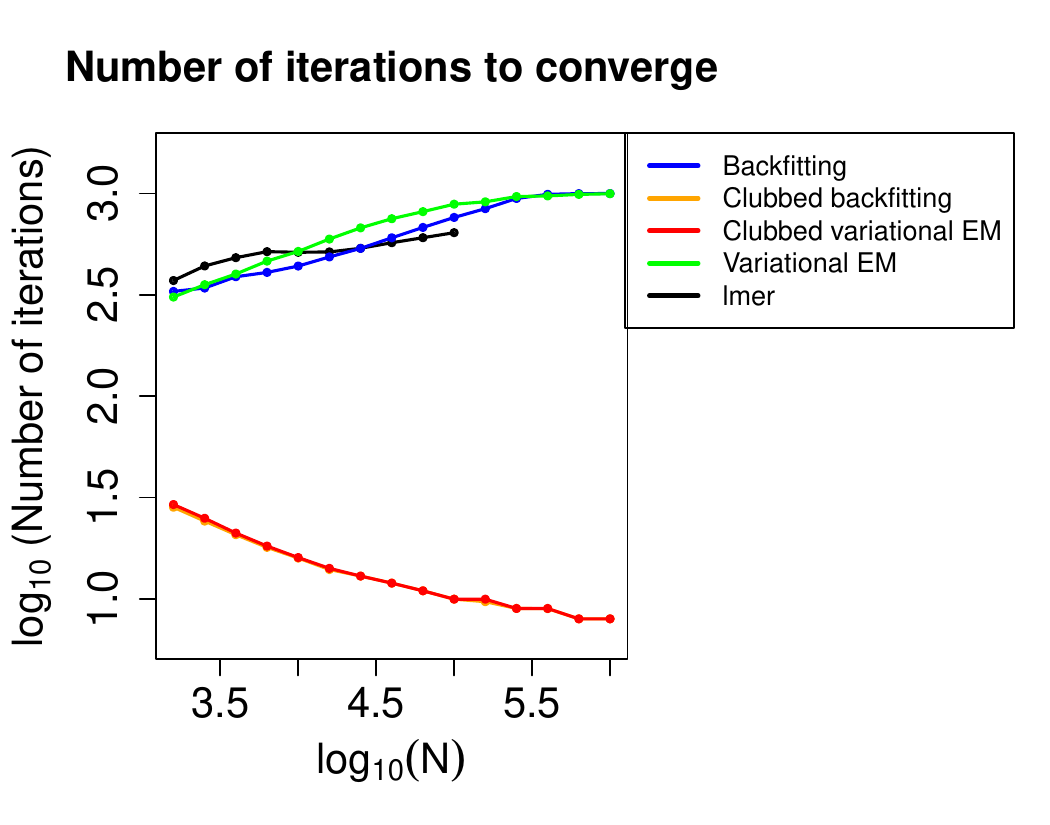}
\caption{Non-diagonal case: Comparison of number of iterations by backfitting, clubbed backfitting, variational EM, and clubbed variational EM (four proposed algorithms) and ``lmer"(existing algorithm). The plot for lmer is discontinued after $N = 10^{5}$ due to the large computational time for ``lmer".}
\label{fig:niter}
\end{figure}
Thus, clubbed variational EM provides precise estimates of all the covariance parameters with a small estimation time.
\subsection{Application of crossed random effects model on `Stitch Fix" data}\label{sec:appl-cross-rand}

~\cite{ghos:hast:owen:2021} applied the backfitting algorithm to obtain $\hat{\bbeta}_{\gls}$ for the Stitch Fix data for the random intercepts model $(\ref{eq:intercept_model})$. Here, we demonstrate how our algorithm can be used to apply the crossed-random effects models to the Stitch Fix data using two models. The first model we implement considers a random intercept and a random slope for a variable named ``match". ``Match" is a continuous variable taking values in the range 0 to 1, representing the predicted probability that the client will keep the item purchased. The second model we implement considers random intercept and random slopes for the variable ``match" and multiple indicator variables named ``client edgy", ``client boho", ``item edgy", ``item boho", etc. These refer to fashion styles that describe some items and some clients' preferences. Some customers and some items were given the adjective ‘edgy’ in the data set and another adjective was ‘boho’, short for ‘Bohemian’. In both of the models, we assume multivariate normal distributions for the random effects for clients as well as items. It should be noted that the purpose of this section is not to choose the most appropriate random effects model. Instead, we wish to illustrate the application of our algorithm in a setup with random slopes. Along with fitting the models, we also measured the prediction accuracy of these models and compared them with ordinary least squares and the model with just random intercepts.
\subsubsection{\textbf{Model 1: }random slope for the variable ``match"}
\label{sec.stitch_fix_model_1}
Here, we implement our devised algorithm to fit the random effects model with a random slope for the ``match" variable. Thus, the model we consider here is
\begin{equation}
\label{eq:Model_1}
y_{ij} =\beta_0 + \tilde{\mathbf{x}}^{\tran}_{ij}(\bsa_i +\bsb_j) + \mathbf{x}_{ij} ^{\tran}\bbeta + \varepsilon_{ij} \hspace{0.5cm} \forall i \in \{1,\cdots, R\} \hspace{0.3cm} j\in \{1,,\cdots,C\},
\end{equation}
where, $\bbeta$ denotes the fixed effect for the entire covariate vector $\mathbf{x}_{ij} $ listed in table~\ref{Tab:stitch_fix}. We assume that ${\bsa}_i \sim \mathcal{N}_2(0, {\Sigma}_a)$ and ${\bsb}_j \sim \mathcal{N}_2(0, {\Sigma}_b)$ represent random effects for $\tilde{\mathbf{x}}_{ij} = (1, \mathrm{match}_{ij})$. The estimate of fixed effect is tabulated in table~$\ref{Tab:stitch_fix}$ as $\hat{\bbeta}_{\gls,1}$ and the fixed effects for random intercepts model is tabulated as $\hat{\bbeta}_{\gls,0}$. The estimate of $\sigma^2_e$ is 4.454 and that of $\Sigma_a$ and $\Sigma_b$ is
$$\wh{\Sigma_a} = \begin{pmatrix} 1.885 & -1.197\\
-1.197 & 1.612\\
\end{pmatrix}
\hbox{ and } \wh{\Sigma_b} = \begin{pmatrix}
0.831 & -1.623\\
-1.623 & 4.297
\end{pmatrix} .$$

\subsubsection{\textbf{Model 2: }random slope for variables ``match", ``I(item edgy)",  and ``I(item boho)" for clients and random slope for variables ``match", ``I(client edgy)", and  ``I(client boho)"  for items}
\label{sec.stitch_fix_model_2}
The model considered here is
\begin{equation}
\label{eq:Model_2}
y_{ij} = \beta_0+\mathbf{x}^\tran_{a, ij}\bsa_i +\mathbf{x}^\tran_{b, ij} \bsb_j+ \mathbf{x}_{ij} ^{\tran}\bbeta + \varepsilon_{ij} \hspace{0.5cm} \forall i \in \{1,\cdots, R\} \hspace{0.3cm} j\in \{1,,\cdots,C\}, \,
\end{equation}
where, $\bbeta$ denotes the fixed effect for the entire covariate vector $\mathbf{x}_{ij}$ similar as above. We assume that ${\bsa}_i \sim \mathcal{N}_4(0, {\Sigma}_a)$ represent random effects for
\begin{align*}
\mathbf{x}_{a, ij} & = (1, \mathrm{match}_{ij},\text{I(item edgy})_j, \text{I(item boho)}_j)
\end{align*}
and ${\bsb}_j \sim \mathcal{N}_4(0, {\Sigma}_b)$ represent random effects for
\begin{align*}
\mathbf{x}_{b, ij} & = (1, \mathrm{match}_{ij},\text{I(client edgy})_i,\text{I(client boho)}_i)
\end{align*}
The estimate of fixed effect is tabulated in table~$\ref{Tab:stitch_fix}$ as $\hat{\bbeta}_{\gls,2}$  and the estimate of $\sigma^2_e$ is 4.363 which is smaller than estimated variance of noise estimated for Model 1. The estimate of $\Sigma_a$ and $\Sigma_b$ is
$$
\wh{\Sigma_a} =
\begin{pmatrix}
1.751 & -1.015 & -0.009 & 0.048 \\ 
   -1.015 & 1.422 & -0.014 & -0.034 \\ 
   -0.009 & -0.014 & 0.340 & 0.047\\ 
   0.048 & -0.034 & 0.047 & 0.490
\end{pmatrix}
$$
$$
\wh{\Sigma_b} = \begin{pmatrix}
0.834 & -1.574 & 0.000 & -0.022 \\ 
 -1.574 & 4.185 & -0.032 & -0.006 \\ 
 0.000 & -0.032 & 0.024 & -0.001 \\ 
 -0.022 & -0.006 & -0.001 & 0.022 \\ 
\end{pmatrix}. $$

\subsection{Naivety and inefficiencies with respect to OLS}
~\cite{ghos:hast:owen:2021} defined naivety and inefficiency of $\ols$ with respect to the random intercepts model to establish the utility of fitting a random effects model. Ordinary least squares usually make two mistakes on crossed random effects data. The first is that OLS is naive about correlations in the data and this can lead it to severely underestimate the variance of $\hat{\bbeta}$. The second is that OLS is inefficient compared to GLS by the Gauss-Markov theorem. Let $\hat{\bbeta}_{\ols}$ and $\hat{\bbeta}_{\gls}$ be the OLS and GLS estimates of $\bbeta$, respectively. They quantified the naivety and inefficiency of OLS as follows:
\begin{itemize}
\item \textbf{Naivety} is defined by the ratios $\var_{\gls}(\hat{\bbeta}_{\ols,j})/\var_{\ols}(\hat{\bbeta}_{\ols,j})$ for $j \in \{1, \cdots, p \}$
\item \textbf{Inefficiency} is defined by the ratios  $\var_{\gls}(\hat{\bbeta}_{\ols,j})/\var_{\gls}(\hat{\bbeta}_{\gls,j})$ for $j \in \{1, \cdots, p \}$.
\end{itemize}

We carry the work forward by displaying the naivety and inefficiency of $\ols$ with respect to the random effects model discussed above. For the first model, the values of naivety range from 2.63 to 933.47 while for the second model, the range is 3.80 to 915.86. These values are larger than naivety values reported in ~\cite{ghos:hast:owen:2021} as the largest value reported by them was 345.28. The largest and second-largest ratios are for material indicators corresponding to ‘Modal’ and ‘Tencel’ using the models with random slopes too. We also identified the linear combination of $\hat{\bbeta}_{\ols}$ for which OLS is most naive. We maximize the ratio $ \mathbf{x}^\tran \wh\cov_{\gls,1}(\hat\bbeta_{\ols})x/\mathbf{x}^\tran \wh\cov_{\ols}(\hat{\bbeta}_{\ols})x$ over $x \ne 0$. The resulting maximal ratio is the largest eigenvalue of $\wh\cov_{\ols}(\hat\bbeta_{\ols})^{-1}\wh\cov_{\gls,1}(\hat\bbeta_{\ols})$ and it is about 1008 while the ratio provided for the random intercepts model in ~\cite{ghos:hast:owen:2021} was 361. The resulting maximal ratio for $\hat \bbeta_{GLS, 2}$ is about 988. We also quantified inefficiency similar to~\cite{ghos:hast:owen:2021}. The figure~$\ref{fig:naivety_inefficiency}$ and $\ref{fig:naivety_inefficiency_mc}$ plots the naivety and inefficiency values. The efficiency values range from 1.06 to 32.60 for the first model and can be interpreted as factors by which using random slopes for the variable ``match" reduces the effective sample size. The efficiency values for the second model range from 1.07 to 32.43, therefore adding the random slope for other variables didn't decrease the effective sample size. The ``match" variable is found to be an outlier in terms of efficiency for both of the random slopes models similar to the model with just random intercepts. An important thing to note here is that though OLS is much more naive with respect to the models with random slopes compared to the model with just random intercepts, although the difference in efficiency is not that huge.

\begin{figure}[!ht]
\centering
\includegraphics[width=\linewidth]{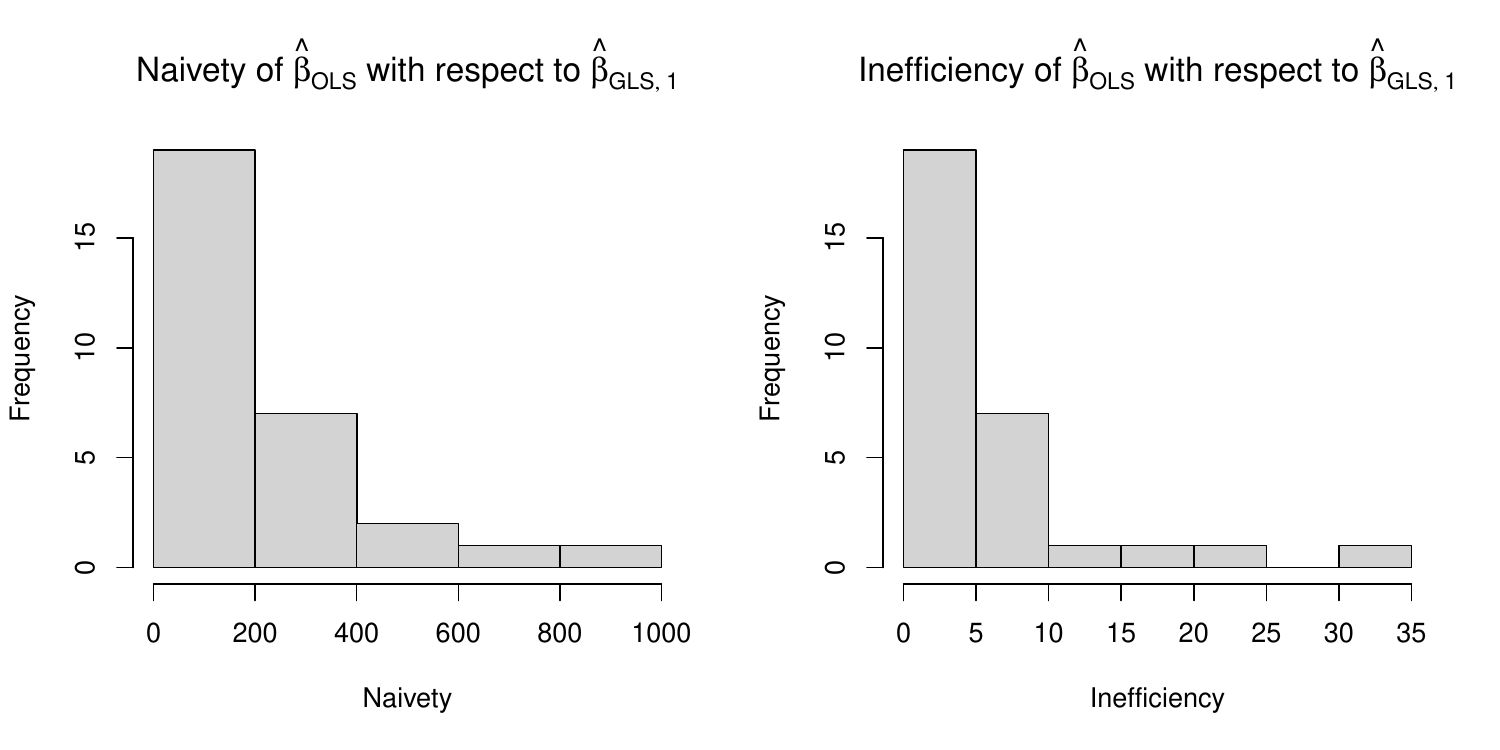}
\includegraphics[width=\linewidth]{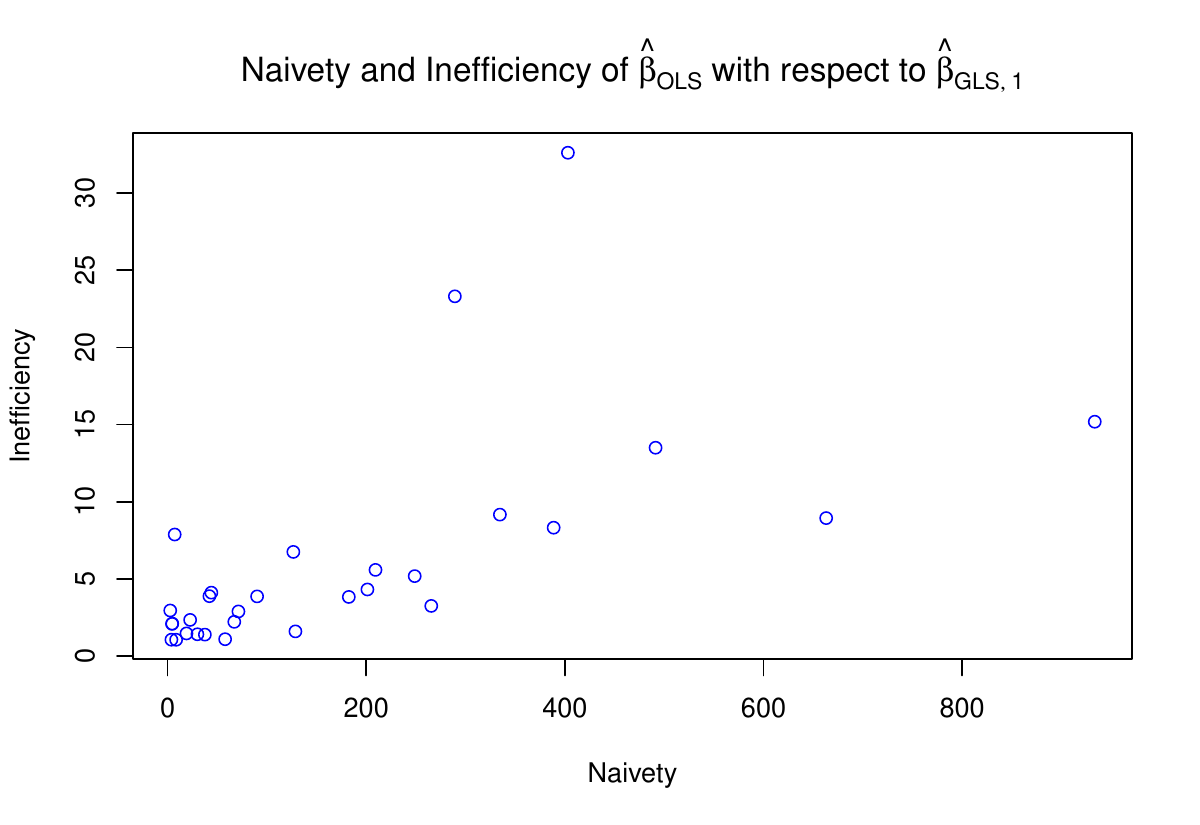}
\caption{Naivety and efficiency of $\hat\bbeta_{\ols}$ with respect to $\hat\bbeta_{\gls,1}$}
\label{fig:naivety_inefficiency}
\end{figure}

\begin{figure}[!ht]
\centering
\includegraphics[width=\linewidth]{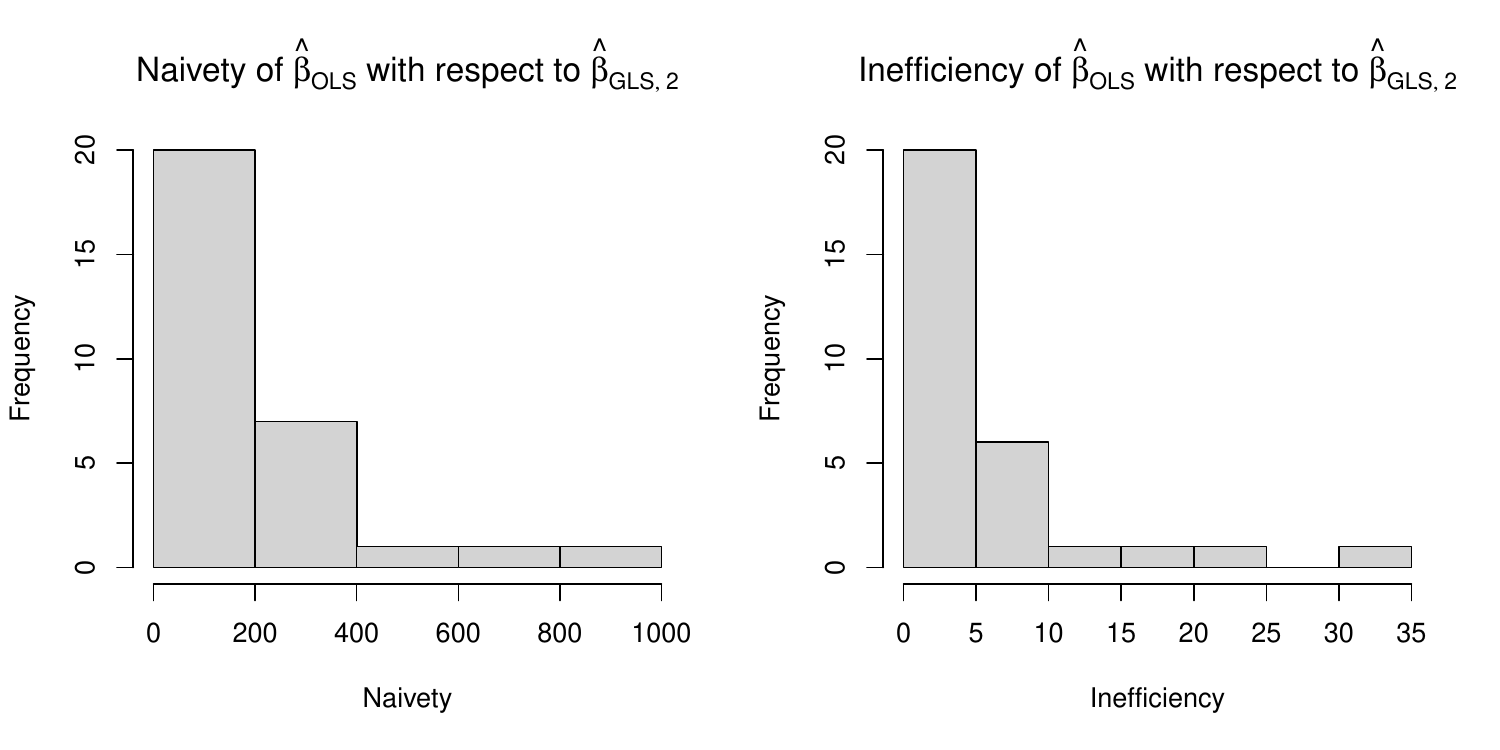}
\includegraphics[width=\linewidth]{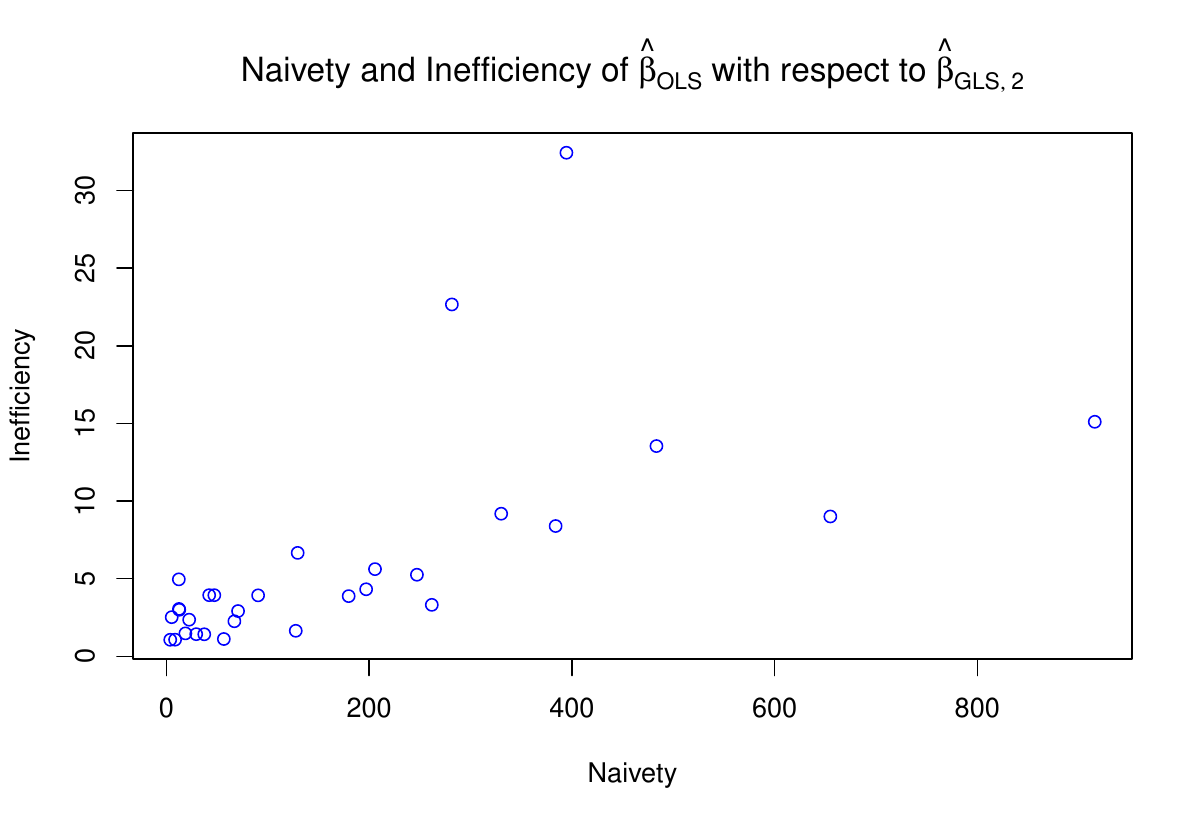}
\caption{Naivety and efficiency of $\hat\bbeta_{\ols}$ with respect to $\hat\bbeta_{\gls,2}$}
\label{fig:naivety_inefficiency_mc}
\end{figure}

\subsection{Fixed effects for Stitch Fix data}
\begin{table}[!ht]
\begin{center}
\begin{tabular}{|l|l|l|l|l|}
\hline
& $\hat\bbeta_{\ols}$ & $\hat\bbeta_{\gls.0}$ & $\hat\bbeta_{\gls.1}$ & $\hat\bbeta_{\gls.2}$  \\ \hline
Match & 5.048*** & 3.442*** & 3.296*** & 3.290***\\ \hline
$\mathbb{I}\{\text { client edgy }\}$ & 0.00102 & 0.00304 & -0.0004 & 0.0005 \\ \hline
$\mathbb{I}\{\text { item edgy }\}$& -0.3358*** & -0.3515*** & -0.3667*** & -0.3885***\\ \hline
\shortstack{$\mathbb{I}$\{client edgy \}\\ $\times \mathbb{I}$\{ item edgy \}} &0.3925***& 0.3793***& 0.3772***& 0.4063***\\ \hline
$\mathbb{I}\{\text { client boho }\}$ & 0.1386*** & 0.1296*** & 0.1312*** & 0.1396***\\ \hline
$\mathbb{I}\{\text { item boho }\}$ & -0.5499*** & -0.6266*** & -0.6270*** & -0.6599***\\ \hline
\shortstack{$\mathbb{I}$\{client boho \}\\ $\times \mathbb{I}$\{ item boho \}} &0.3822*** &0.3763*** & 0.3731*** & 0.4049***\\ \hline
Acrylic & -0.06482*** & -0.00536 & 0.0122 & 0.0468 \\ \hline
Angora & -0.0126 & 0.07486 & 0.0978 &  0.1178\\ \hline
Bamboo & -0.04593 & 0.03251 & 0.0730 & 0.0839\\ \hline
Cashmere & -0.1955*** & 0.00893 & 0.0261 & 0.0576 \\ \hline
Cotton & 0.1752*** & 0.1033*** & 0.1196*** & 0.1308*** \\ \hline
Cupro & 0.5979* & 0.2089 & 0.0350 & 0.0318\\ \hline
Faux Fur & 0.2759*** & 0.2749*** & 0.2814** & 0.2779** \\ \hline
Fur & -0.2021*** & -0.07924 & -0.0776 & -0.0533\\ \hline
Leather & 0.2677*** & 0.1674* & 0.1531 & 0.1557\\ \hline
Linen & -0.3844*** & -0.08658 & -0.0833 & -0.1089 \\ \hline
Modal & 0.0026 & 0.1388* & 0.1318 & 0.1397\\ \hline
Nylon & 0.0335* & 0.08174 & 0.0899 & 0.0933 \\ \hline
Patent Leather & -0.2359 & -0.3764 & -0.5266 & -0.5448\\ \hline
Pleather & 0.4163*** & 0.3292*** & 0.3510*** & 0.3315*** \\ \hline
PU & 0.416*** & 0.4579*** & 0.4509*** & 0.4618***\\ \hline
PVC & 0.6574*** & 0.9688** & 0.8546 & 0.8812\\ \hline
Rayon & -0.01109*** & 0.05155*** & 0.0654*** & 0.0692*** \\ \hline
Silk & -0.1422*** & -0.1828*** & -0.2186*** & -0.2117*** \\ \hline
Spandex & -0.3916*** & 0.414*** & 0.4080** & 0.4290** \\ \hline
Tencel & 0.4966*** & 0.1234* & 0.1163 & 0.1442\\ \hline
Viscose & 0.04066*** & -0.02259 & -0.0336 & -0.0370\\ \hline
Wool & -0.0602*** & -0.05883 & -0.0689 & -0.0521 \\ \hline 
\end{tabular}
\end{center}
\caption{Each column represents the estimated fixed effects for the
  \emph{Stitch Fix} data under different random effect regimes: OLS
  (no random effects), crossed random effects model with random
  intercepts,  and two instances of the crossed random effects model with random intercepts as well as random slopes. ``Match" is a continuous variable taking values in the range 0 to 1 representing the prediction of the probability that the client will keep the item purchased (based on an independent baseline model). The indicator variables named ``client edgy", ``client boho", ``item edgy", ``item boho" refer to fashion styles that describe some items and some client's preferences. The asterisks used to indicate significance follow the standard R convention, where * represents significance at the 5\% level ($p \le$ 0.05), ** at the 1\% level ($p \le$ 0.01), and *** at the 0.1\% level ($p \le$ 0.001).}
  
\label{Tab:stitch_fix}
\end{table}
\footnotetext{Some customers and some items were given the adjective ‘edgy’ in the data set and another adjective was ‘boho’, short for ‘Bohemian’. The remaining indicator variables represent the textile used in manufacturing the item. }
\subsubsection{Summary}
In table~\ref{Tab:stitch_fix}, we present the estimated fixed effects while fitting linear model, crossed random effects model with random intercepts~(\ref{eq:intercept_model})  and the two crossed random effects models with random intercepts as well as random slopes discussed in Section \ref{sec.stitch_fix_model_1} and \ref{sec.stitch_fix_model_2}. The estimate for fixed effect of the variable ``match" is 5.048 for OLS and the estimate is 3.442, a much smaller value for crossed random effects model with random intercepts. When we add random slopes associated with this variable for clients as well as items, the fixed effect reduces to 3.296. However, considering the randomness, the effect associated with this variable fluctates in the range 3.296 $\pm$ 1.269 for clients and 3.296 $\pm$ 2.073 for items. The textile indicator variables like ``Acrylic", ``Cashmere", ``Fur", ``Linen", etc.,  are found to be significant for the OLS model but they lose their significance on fitting crossed random effects models. A possible reason could be that OLS underestimates the standard errors. There are certain textile variables like ``Leather", ``PVC", and ``Tencel" which are significant while fitting crossed random effects model with random intercepts but they loose significant on adding randomness for slope of 	``match". The explaination once again is that the crossed random effects model with just random intercepts also underestimates the standard errors similar to OLS, which was also implicated in the naivety and efficiency calculations above. Upon getting a more accurate picture of the standard errors, we find fewer covariates to be significant. The direction and magnitude of the fixed effects are mostly consistent between the two crossed random effects models for the variables which are found be significant by both of them.  

\subsection{Prediction error}
We considered a random train-test split of Stitch Fix data. We divide the test data into three categories:
\begin{itemize}
\item \textbf{Repeat client and repeat item:} The category includes test points with clients whose information was seen in the training data and items whose information was also available in the training data. The prediction of the rating was given by
\[\hat{Y}_{ij} = \mathbf{x}_{ij} ^\tran (\hat \bbeta + \hat \bsa_i + \hat \bsb_j)\]
where $\hat \bbeta, \hat {\bsa_i},$ and $\hat {\bsb_j}$ are the estimates of $\bbeta$ and BLUP estimates of $\bsa_i$ and $\bsb_j$ respectively.
\item \textbf{New client and repeat item:} The category includes test points with clients whose information was not seen in the training data and items whose information was available in the training data. The prediction of the rating was given by
\[\hat{Y}_{ij} = \mathbf{x}_{ij} ^\tran (\hat \bbeta + \hat \bsb_j)\]
as the prior expectation for $\bsa_i$ is zero.
\item \textbf{Repeat client and new item:} The category includes test points with clients whose information was seen in the training data and items whose information was not available in the training data. The prediction of the rating was given by
\[\hat{Y}_{ij} = \mathbf{x}_{ij} ^\tran (\hat \bbeta + \hat \bsa_i)\]
as the prior expectation for $\bsb_j$ is zero.
\end{itemize}
We observed only two test data points in the third category, so we are only reporting prediction errors in the first two categories. Similar to table \ref{Tab:stitch_fix}, $\gls_0$ represents the crossed random effects model with random intercepts, $\gls_1$ and $\gls_2$ represent the crossed random effects models with random slopes discussed in Section \ref{sec.stitch_fix_model_1} and  \ref{sec.stitch_fix_model_2} respectively.
\begin{table}[!ht]
\begin{center}
\begin{tabular}
{|c c|c|c|c|}
\hline
& $\ols$ & $\gls_0$ & $\gls_1$ & $\gls_2$ 
\\ 
Repeat client and repeat item & 5.710125 & 4.875282 & 4.864025 & 4.853029
\\
New client and repeat item & 5.682390 & 5.606388 & 5.543627 & 5.550773
\\
\hline
\end{tabular}
\end{center}
\caption{Prediction error for different models among different categories}
\label{table:prediction_errors}
\end{table}
For the first category, we observe 99,115 data points and for the second category, we observe 883 data points. $\gls_0$ provides a significant improvement over $\ols$ in the first category but the improvement is not very significant for the second category as the information for the clients is not available in the training data. $\gls_1$ provides a small improvement over $\gls_0$ in the first category and larger improvement in the second category. $\gls_2$ provides small improvement over $\gls_1$ in the first category, although the performance gets worse in the second category. The overall performance is found best for the second model i.e. $\gls_2$. The results are consistent with the naivety and inefficiency values we discussed before.

\subsection{Application of crossed random effects model on Movie lens data}
In this subsection, we compute the prediction error of crossed random effects model with random slopes on MovieLens dataset and compare it with baseline models like OLS and crossed random effects model with random intercepts. We performed the real data analysis using \emph{MovieLens 100K} dataset available on $\hyperlink{https://www.kaggle.com/datasets/prajitdatta/movielens-100k-dataset}{kaggle}$. The data set describes ratings and free-text tagging activities from MovieLens, a movie recommendation service. It contains around 100,000 ratings from 942 users on 1682 movies. Each user has rated at least 20 movies. The data set also comes with certain information on users and movies, e.g., age, gender of users and genre of the movies. We use the combined information of user and movies as feature vector for linear model, i.e., $$\mathbf{x}_{ij}  = (\text{age of user } i, \text{gender of user } i, \text{one hot encoding for genre of movie } j).$$ Following \cref{remark:randomslopes}, we use subset of client-specific features as covariates with random slope for items and vice-versa. Therefore, we either use just random intercepts for clients, or we use random intercepts supplemented by random slopes for the one hot encoding vector of genre of the movie. Similarly, we either use just random intercepts for items or we use random intercepts supplemented by random slopes for age and gender of the users. We represent these models by $\text{Model}_{00}$, $\text{Model}_{01}$, $\text{Model}_{10}$, and $\text{Model}_{11}$, i.e., $\text{Model}_{ij}$ is a model with random slopes for clients if $i$ = 1 and random slopes for items if $j$ = 1. 

\begin{table}[!ht]
\label{movie_lens_MSE}
\begin{center}
\begin{tabular}{|c|c|c|c|c|c|}
  \hline
 & OLS & $\text{Model}_{00}$ & $\text{Model}_{01}$ & $\text{Model}_{10}$ & $\text{Model}_{11}$ \\ 
  \hline
MSE & 1.224 & 0.885 & 0.878 & 0.881 & 0.875 \\ 
   \hline
\end{tabular}
\end{center}
\caption{Prediction error of ratings in MovieLens data }
\end{table}
$\text{Model}_{00}$ is the crossed model effects model with just random intercepts discussed by \cite{ghos:hast:owen:2021} and it serves as another baseline model for us. The purpose of this section was to illustrate the utility of using crossed random effects model with random slopes. As the models with random slopes helps in bringing significant improvement in prediction accuracy, it establishes the need of scalable solution to crossed random effects model. These models could serve as efficient recommender systems and they come with an additional advantage of being interpretable.
\section{Conclusion}
In the paper, we discussed estimation under crossed random effects model with random slopes. We suggested two different algorithms based on the method of moments combined with backfitting and variational EM. The former is most appropriate when the covariance matrices for the random effects are diagonal and the latter is useful when the covariance matrices are unstructured. The algorithm based on the method of moments was shown to have theoretical as well as empirical consistency to estimate fixed effects and covariance parameters when covariance matrices for random effects were diagonal. The algorithm based on variational EM was able to accomplish a more difficult task which is to estimate the parameters when the covariance matrices are unstructured and showed great empirical results. Both of the algorithms could complete the task to estimate fixed and covariance parameters in $\mathcal{O}(N)$ and are superior to ``lmer" which takes at least $\mathcal{O}(N^{3/2})$. The algorithms find huge applications in settings with large sample sizes and where the time taken to obtain the fit is crucial. So for example, for a problem with $N=100,000$ ``lmer" might take over 40 minutes, while our methods take under 10 seconds; for problems with $N=1,000,000$ our methods take around 30 seconds (see Figure~\ref{fig:diag_estimation_time}.) We demonstrated the performance of our algorithms using simulations as well as real data.
\section*{Acknowledgments}
We thank Brad Klingenberg and Stitch Fix for sharing
the Stitch Fix data.
This work was supported by the U.S.\ National Science Foundation
under grant IIS-1837931.

\bibliography{bigdata}
\newpage

\begin{appendix}

\section{Some proofs}
\subsection{Proof of Theorem~\ref{thm:robinson}}\robin* 
\label{Alternative_proof}
Here,
\begin{equation*}
\hat\bbeta_{\gls} = \argmin_{\bbeta} (Y-X\bbeta)\mathcal{V}^{-1}(Y -X\bbeta)=(X^{\tran}\mathcal{V}^{-1}X)^{-1}X^{\tran}\mathcal{V}^{-1}Y \tag{\ref{eq:beta_gls}}
\end{equation*}
\begin{proof}
We rewrite the model as
\[Y=X \bbeta + Z U + \varepsilon \, \, \text{ where, } U \sim \mathcal{N}(0, \Sigma_u), \text{ and } \epsilon \sim \mathcal{N}(0, \sigma^2_e I)\]
Then,
\[P_r(Y = y, U=u) = P_r(Y = y) P_r(U=u| Y =y) = P_r(Y =y|U = u)P_r(U = u) \]
$ P_r(Y = y)$ corresponds to the likelihood in $(\ref{eq:likelihood})$ and $\hat\bbeta_{\gls}$ in $(\ref{eq:beta_gls})$, and $ P_r(Y =y|U = u) P_r(U = u)$ corresponds to the exponent of $(\ref{eq:loss}$). Now,
\begin{align*} P_r & (U=u|Y = y) \\
& = \frac{1}{\sqrt{|Var(U|y)|}} \exp \left(-\frac{(u-\Ex[U|y])^{\tran} \left[Var(U|y)\right]^{-1} (u-\Ex[U|y])}{2} \right) \end{align*}
which is maximized for $u=\Ex[U|Y = y]$, and
$$\max_{u} P_r(U=u|Y = y) = \frac{1}{(2\pi)^{(p+1)/2}\sqrt{|Var(U|y)|}} = \mbox{a constant }$$
\[\max_{u} P_r(y, U=u) = \max_{u} P_r(Y = y) P_r(U=u|Y = y) = \mbox{ constant } \times p(y)\]
Therefore,
\[\argmax_{\bbeta} \max_{u} P_r(Y = y, U=u) = \argmax_{\bbeta} P_r(Y = y) \frac{1}{\sqrt{|Var(U|y)|}}\]
It is understood that $Var(U|y)$ is constant in terms of $\bbeta$, which implies,
\[\argmax_{\bbeta} \max_{u} P_r(Y = y, U=u) = \argmax_{\bbeta} P_r(Y = y) \]
which is equivalent to saying that
$$\hat\bbeta = \hat\bbeta_{\gls}$$
\end{proof}
\subsection{Consistency of \texorpdfstring{$\hat{\bbeta}_{\ols}$}{beta ols}}
\betaols*
\begin{proof}
To begin with, we assume the observations are arranged by the level of the first factor.
$Y = X\bbeta + \varepsilon$ where $\varepsilon \sim N(0,\mathcal{V})$ for
\begin{align*}
\mathcal{V}& = \underbrace{diag(X_{1}^{(a)}\Sigma_{a}X_{1}^{(a) \tran},\cdots,X_{R}^{(a)}\Sigma_{a}X_{R}^{(a) \tran})}_{B_{R}} \\
& \hspace{2cm}+ P_{C} \underbrace{diag(X_{1}^{(b)}\Sigma_{b}X_{1}^{(b) \tran},\cdots,X_{C}^{(b)}\Sigma_{b}X_{C}^{(b)\tran})}_{B_{C}} P_{C}^{\tran} + \sse\bf{I}_{N}
\end{align*}

with $P_{C}$ being an appropriate permutation matrix. Note that $\hat{\bbeta}_{\ols} = \bbeta + (X^{\tran}X)^{-1}X^{\tran}\varepsilon.$ $\Ex((X^{\tran}X)^{-1}X^{\tran}\varepsilon) = \bf{0}.$
Let $w$ be any $p \times 1$ vector with finite norm.
\begin{align*}
\mathrm{Var}(w^{\tran}(X^{\tran}X)^{-1}X^{\tran}\varepsilon) &= w^{\tran}(X^{\tran}X)^{-1}X^{\tran}\mathcal{V}X(X^{\tran}X)^{-1}w\\
&\leq \lambda_{\max}(B_{R}) w^{\tran}(X^{\tran}X)^{-1}w \\
& \hspace{0.2cm}+ \lambda_{\max}(P_CB_{C}P_C^{\tran}) w^{\tran}(X^{\tran}X)^{-1}w + \sse w^{\tran}(X^{\tran}X)^{-1}w \\
&\leq \lambda_{\max}(B_{R}) \frac{\Vert w \Vert^{2}}{cN} + \lambda_{\max}(P_CB_{C}P_C^{\tran}) \frac{\Vert w \Vert^{2}}{cN} + \sse \frac{\Vert w \Vert^{2}}{cN}\\
&\leq \lambda_{\max}(B_{R}) \frac{\Vert w \Vert^{2}}{cN} + \lambda_{\max}(B_{C}) \frac{\Vert w \Vert^{2}}{cN} + \sse \frac{\Vert w \Vert^{2}}{cN} \to 0
\end{align*}
So $\hat{\bbeta}_{\ols}$ is consistent estimator for $\bbeta$.
\end{proof}
\subsection{Method of Moments}\label{appendix.mom}
\mmconsis*
\begin{proof}
Each of the equations (\ref{eq:mm1}), (\ref{eq:mm2}), and (\ref{eq:mm3}) can be represented as
\begin{equation}\Ex \bigg[ r^{\tran} (I - Q) r\bigg] = \hat r^{\tran} (I - Q) \hat r \text{ for some } Q \in \mathbb{R}^{N \times N}.
\label{eq:matrix_form}
\end{equation}
For example, $Q$ for $(\ref{eq:mm1})$ is given by
\begin{equation}
\label{eq:Q}
Q = \begin{bmatrix}
\mathlarger{\frac{{x^{(s)}_{1.}}x^{(s)^{\tran}}_{1.}}{\|{x^{(s)}_{1.}}\|^2}} & 0 & \cdots & 0 \\
0 & \mathlarger{\frac{{x^{(s)}_{2.}}x^{(s)^{\tran}}_{2.}}{\|{x^{(s)}_{2.}}\|^2}} & \cdots & 0 \\
0 & 0 & \ddots & 0\\
0 & 0 & \cdots & \mathlarger{\frac{{x^{(s)}_{R.}}x^{(s)^{\tran}}_{R.}}{\|{x^{(s)}_{R.}}\|^2}}\\
\end{bmatrix}
\end{equation}
It is sufficient to show that $(\Ex[r^{\tran}(I-Q)r|X]- \hat{r}^{\tran}(I-Q) \hat{r})/N$ converges to zero $ \forall Q \in \{P_1, \cdots, P_p\}$, $\{P'_1, \cdots, P'_p\}, J_n $ and  $\forall \, \, \Sigma_a, \Sigma_b, \sigma^2_e $ where $J_n = 1_n 1_n^{\tran} / n$.
\begin{align*}
\frac{\hat{r}^{\tran}(I-Q) \hat{r}}{N} & =\frac{ (\hat{r}-r+r)^{\tran}(I-Q) (\hat{r} - r + r)}{N}\\
& = \frac{r^{\tran}(I-Q)r + (\hat{r}-r)^{\tran}(I-Q) (\hat{r} - r) + 2 r^{\tran} (I-Q) (\hat{r} - r)}{N} \\
& = \frac{r^{\tran}(I-Q)r}{N} + \frac{(\hat{\bbeta}_{\ols}-\bbeta)^{\tran} X^{\tran}(I-Q) X (\hat{\bbeta}_{\ols} - \bbeta)}{N}\\
& \hspace{5cm} - \frac{2 r^{\tran} (I-Q) X (\hat{\bbeta}_{\ols} - \bbeta)}{N}
\end{align*}
As $\mathbf{x}_{ij} $ have finite fourth moment $\forall i \in \{1, \cdots, R\} \, j \in \{1, \cdots, C\} $, $X^{\tran} X /N = O_p(1) $, and as $\hat{\bbeta}_{\ols}$ is consistent, second and third terms converge to zero in probability.
Now,
\begin{align*}
\frac{(\hat{r}^{\tran}(I-Q) \hat{r}-\Ex[r^{\tran}(I-Q)r|X])}{N} & = \frac{r^{\tran}(I-Q)r}{N} - \frac{\Ex[r^{\tran}(I-Q)r|X])}{N} + o_p(1)
\end{align*}
Now, \begin{equation} \frac{r^{\tran}(I-Q)r - \Ex[r^{\tran}(I-Q)r]}{N} \to 0 \end{equation}
and
\begin{equation} \frac{\Ex[r^{\tran}(I-Q)r|X] - \Ex[r^{\tran}(I-Q)r]}{N} \to 0 \end{equation}
by the strong law of large numbers assuming finite fourth moment of $\mathbf{x}_{ij} $ $\forall i \in \{1, \cdots, R\} $ and $ j \in \{1, \cdots, C\} $.
Let, $\btheta = (diag(\Sigma_a),diag(\Sigma_b),\sigma^2_e)$, then method of moment estimate, $\btheta_n $ is solution of equations in the form $A_n \btheta = B_n$, i.e. , $\btheta_n = A_n^{-1} B_n$. The above proof guarantees that $A_n \to A$ and $B_n \to A \btheta, \, \, \forall \btheta$. Therefore, $\btheta_n \to \btheta$.
\end{proof}
\subsection{Proof of~(\ref{eq:simplify})}
We know from (\ref{eq:random_slopes}) that
\[r_{ij} = \mathbf{x}_{ij} ^{\tran} (\bsa_i + \bsb_j) + \varepsilon_{ij}\]
It could also be represented as
\[
r = \sum_{s'=0}^{p} X_A^{(s')} A_{s'} + \sum_{s'=0}^{p} X_B^{(s')} B_{s'} + \varepsilon.
\]
Here, $A_{s'} \in \mathbb{R}^R$ and $B_{s'} \in \mathbb{R}^C$ represent the stacked random coefficients of the $s'$-th coordinate for clients and items, respectively, and $X_A^{(s')}$ and $X_B^{(s')}$ denote the corresponding design matrices. Then, we have
\begin{align*}
&\Ex\bigg[\sum_{i=1}^{R} \sum_{j=1}^{C} z_{ij} r^2_{ij}-\sum_{i=1}^{R}\frac{\left(\sum_{j=1}^{C} z_{ij} r_{ij}x^{(s)}_{ij}\right)^2}{\sum_{j=1}^{C} z_{ij}x^2_{ij(s)}} \bigg{|} X\bigg] = \Ex[r^{\tran}(I-Q)r]\\
& = \Ex\left(\sum_{s'=0}^{p}X_A^{(s')}A_{s'}+\sum_{s'=0}^{p}X_B^{(s')}B_{s'} + \varepsilon\right)^{\tran}\left(I -Q\right)\\
& \hspace{3cm}\cdot\left(\sum_{s'=0}^{p}X_A^{(s')}A_{s'}+\sum_{s'=0}^{p}X_B^{(s')}B_{s'} + \varepsilon\right)\\
& = \sum_{s'=0}^{p} E\bigg[(X_A^{(s')}A_{s'})^{\tran}(I-Q)(X_A^{(s')}A_{s'})\bigg]\\
& \hspace{1cm}+\sum_{s'=0}^{p} E\bigg[(X_B^{(s')}B_s')^{\tran}(I-Q)(X_B^{(s')}B_{s'})\bigg] + \sigma^2_e (n-R)\\
& = \sum_{s'=0}^{p}\tr\bigg[X_A^{\left(s'\right)^{\tran}}\left(I-Q\right)X_A^{\left(s'\right)}\bigg]\Sigma_{a,s's'} +\sum_{s'=0}^{p} \tr\bigg[X_B^{\left(s'\right)^{\tran}}\left(I-Q\right)X_B^{\left(s'\right)}\bigg]\Sigma_{b,s's'}\\
& \hspace{9.5cm} + \sigma^2_e (n-R)\\
&=\sum_{s'=0}^{p} \sum_{i=1}^{R}\left[ \sum_{j=1}^{C} z_{ij} x^{(s')^2}_{ij}-\frac{\left(\sum_{j=1}^{C} z_{ij} x^{(s)}_{ij} x^{(s')}_{ij} \right)^2}{\sum_{j=1}^{C} z_{ij} x^{(s)^2}_{ij}}\right]\Sigma_{a,s's'} \\
& \hspace{1cm}+\sum_{s'=0}^{p}\sum_{i=1}^R\left[ \sum_{j=1}^{C}z_{ij} x^{(s')^2}_{ij} -\frac{\sum_{j=1}^{C}z_{ij} x^{(s)^2}_{ij} x^{(s')^2}_{ij}}{\sum_{j=1}^{C} z_{ij} x^{(s)^2}_{ij}}\right]\Sigma_{b,s's'} + \sigma^2_e (n-R)\\
\end{align*}
The first part of the last equality follows directly from the definition of $Q$, the last part could be shown using the following argument:
\\
Let, \[X_B^{(s')} = \begin{bmatrix}
W_1\\
W_2\\
\vdots\\
W_R
\end{bmatrix}\]
\begin{align*}
\tr\bigg[\left(X_B^{(s')}\right)^{\tran}(I-Q)\left(X_B^{(s')}\right)\bigg] & = \sum_{i=1}^R \tr\bigg[W^{\tran}_i \left(I - \frac{x^{(s)}_{i.}{x^{(s)}_{i.}}^{\tran}}{\|x^{(s)}_{i.}\|^2}\right)W_i\bigg]\\
& = \sum_{i=1}^R \tr(W^{\tran}_i W_i) - \frac{\tr\left(x^{(s)^{\tran}}_{i.} W_i W_i ^{\tran} x^{(s)}_{i.}\right)}{\|x^{(s)}_{i.}\|^2}
\end{align*}
Now,
\[W_i W^{\tran}_i = \begin{bmatrix}
{x^{(s')}_{i1}}^2 & 0 & \cdots & 0\\
0 & {x^{(s')}_{i2}}^2 & \cdots & 0\\
0 & 0 & \ddots & 0\\
0 & 0 & 0 & {x^{(s')}_{in_i}}^2\\
\end{bmatrix}\]
Thus,
\[\tr\bigg[\left(X_B^{(s')}\right)^{\tran}(I-Q)\left(X_B^{(s')}\right)\bigg] = \sum_{i=1}^R\left[ \sum_{j=1}^{C}z_{ij} x^{(s')^2}_{ij} -\frac{\sum_{j=1}^{C}z_{ij} x^{(s)^2}_{ij} x^{(s')^2}_{ij}}{\sum_{j=1}^{C} z_{ij} x^{(s)^2}_{ij}}\right]\]
The additional equation we added was given by
\[\hat{r}^{\tran}\bigg(I - \frac{1}{n}11^{\tran}\bigg) \hat{r} = E\bigg[r^{\tran}\bigg(I - \frac{1}{n}11^{\tran}\bigg)r\bigg]\]
L.H.S. of the above equation will be given by
\[\sum_{i=1}^R \sum_{j=1}^{C} z_{ij}\hat{r}^2_{ij}- \frac{1}{n}\left(\sum_{i=1}^R\sum_{j=1}^{C}z_{ij}\hat{r}_{ij}\right)^2\]
For R.H.S., similar to above, we will need the following expressions:
\begin{align*}\tr\bigg[\left(X_A^{(s)}\right)^{\tran}\bigg(I-\frac{1}{n} & 11^{\tran}\bigg) \left(X_A^{(s')}\right)\bigg] \\
& = \sum_{i=1}^{R}\bigg( \sum_{j=1}^{C}z_{ij} x^{(s)}_{ij}x^{(s')}_{ij}-\frac{\sum_{j=1}^{C} z_{ij}x^{(s)}_{ij}\sum_{j=1}^{C}z_{ij} x^{(s')}_{ij}}{n}\bigg)\end{align*}
and
\begin{align*}\tr\bigg[\left(X_B^{(s)}\right)^{\tran}\bigg(I-  \frac{1} {n}& 11^{\tran}\bigg) \left(X_B^{(s')}\right)\bigg] \\
& = \sum_{j=1}^{C}\bigg( \sum_{i=1}^{R}z_{ij} x^{(s)}_{ij}x^{(s')}_{ij}-\frac{\sum_{i=1}^{R} z_{ij}x^{(s)}_{ij}\sum_{i=1}^{R}z_{ij} x^{(s')}_{ij}}{n}\bigg)\end{align*}

\subsection{Simplication of covariance of \texorpdfstring{$\hat{\bbeta}$}{beta} }
\label{sec.covariance_of_beta_hat}
\begin{align*}
 & \cov{\hat{\bbeta}} \\
& =(X^{\tran} \tilde{X})^{-1} \tilde{X}^{\tran} \left( X \hat\Sigma_a X^\tran \bullet Z_a Z_a^\tran + X \hat\Sigma_b X^\tran \bullet Z_b Z_b^\tran  + \hat \sigma^2_e I_N \right)  \tilde{X}  (\tilde X^{\tran} X)^{-1} \\
& = 
(X^{\tran} \tilde{X})^{-1}  \left[ \tilde{X}^{\tran} \left( X \hat\Sigma_a X^\tran \bullet Z_a Z_a^\tran \right) \tilde X +  \tilde{X}^{\tran} \left( X \hat\Sigma_b X^\tran \bullet Z_b Z_b^\tran \right) \tilde{X} \right] (X^{\tran} X)^{-1} \\
&  \hspace{7cm} + \hat \sigma^2_e     (\tilde X^{\tran} X)^{-1} \\
\end{align*}

Let $P_a \in \mathbb{R}^{N \times N}$ denotes the permutation matrix that rearranges the data points so that they are grouped by clients and $P_b \in \mathbb{R}^{N \times N}$ denotes the permutation matrix that rearranges the data points so that they are grouped by items.

\begin{align}
\label{eq.sd_simp}
\tilde{X}^{\tran} \left( X \hat\Sigma_a X^\tran \bullet Z_a Z_a^\tran \right) \tilde X & = \tilde{X}^{\tran} P_a^\tran P_a \left( X \hat\Sigma_a X^\tran \bullet Z_a Z_a^\tran \right)  P_a^\tran P_a\tilde X \\
& = \tilde{X}^{\tran} P_a ^\tran \left(P_a X \hat\Sigma_a X^\tran P_a ^\tran  \bullet  P_a Z_a Z_a^\tran P_a^\tran \right) P_a \tilde X
\end{align}

In order to keep the notation simplified, we shall move ahead assuming that data is grouped by clients, i.e., we shall be denoting $P_a X$ by $X$, $P_a \tilde X$ by $\tilde X$ and $P_a Z_a$ by $Z_a$. The assumption simplifies the structure of $Z_a Z_a ^\tran$ to the block diagonal matrix given by

\[Z_a Z_a ^\tran = \begin{bmatrix}
1_{n_1} 1^\tran_{n_1} & 0 & \cdots & 0\\
0 & 1_{n_2} 1^\tran_{n_2} & \cdots & 0\\
0 & 0 & \ddots & 0\\
0 & 0 & 0 & 1_{n_R} 1^\tran_{n_R}\\
\end{bmatrix}\]

Now, let's define $U = X \hat{\Sigma_a}^{1/2}$ and represent column $i$ of $\tilde{X}$ by $s$ and column $j$ of $\tilde{X}$ by $t$. Therefore, $X \hat\Sigma_a X^\tran = \sum_{q = 0}^{p} U^{(q)} {U^{(q)}} ^\tran $ where $\{U^{(q)}\}$ represents columns of $U$.

 Therefore, $(i,j)$ entry in equation $\ref{eq.sd_simp}$ is given by 

\begin{align*} \sum_{q = 0} ^{p} (s^{\tran}_{1.},  \cdots, s^{\tran}_{R.}) & \begin{bmatrix}
{U^{(q)}}_{1.} {U^{(q)}}_{1.}^\tran & \cdots & 0\\
0  & \ddots & 0\\
0 & \cdots & {U^{(q)}}_{R.} {U^{(q)}}_{R.}^\tran\\
\end{bmatrix} \begin{pmatrix}
t_1\\
 \vdots\\
t_R\\
\end{pmatrix} \\
& = \sum_{q = 0} ^ p \sum_{i = 1}^{R} s_{i.} ^\tran {U^{(q)}}_{i.}  {U^{(q)}}_{i.}^\tran t_{i.}\\
& = \sum_{q = 0} ^ p \sum_{i = 1}^{R} s_{i.} ^\tran {U^{(q)}}_{i.}  {U^{(q)}}_{i.}^\tran t_{i.}\\
& = \sum_{q = 0} ^ p \sum_{i = 1}^{R} \sum_{j = 1}^C z_{ij} s_{ij} ^\tran {U^{(q)}}_{ij}  {U^{(q)}}_{ij}^\tran t_{ij}
 \end{align*}

Therefore, the quantity $\left[ \tilde{X}^{\tran} \left( X \hat\Sigma_a X^\tran \bullet Z_a Z_a^\tran \right) \tilde X +  \tilde{X}^{\tran} \left( X \hat\Sigma_b X^\tran \bullet Z_b Z_b^\tran \right) \tilde{X} \right] \in \mathbb{R}^{(p+1) \times (p+1)}$ can be computed in $\mathcal{O}(N p^3)$ time which makes the computation of $\cov(\hat{\bbeta})$ possible in time linear in $N$. 
\subsection{Proof of Lemma~\ref{lemma:var_E}}\label{sec:proof-lemma-refl}
\begin{align*}
& q_a^k  (a_1, \cdots, a_R) \propto \Prob (y, \{\bsa_i\}_{i=1}^R | \bsb_j = E_{Q^{(k-1)}} [\bsb_j])\\
& \propto \exp \left \{ -\frac{1}{2} \sum_{i=1}^R \sum_{j=1}^C \frac{z_{ij} (y_{ij} -\mathbf{x}_{ij} ^\tran (\bsa_i + \mu_{b,j}^{(k-1)}+ \bbeta))^2}{\left(\sigma^{(k-1)}_e\right)^2} -\frac{1}{2} \sum_{i=1}^R \bsa_i ^\tran (\Sigma_a^{(k-1)})^{-1} \bsa_i \right \}\\
& \propto \exp \Bigg \{ -\frac{1}{2} \sum_{i=1}^R \bsa_i^\tran \Bigg( (\Sigma^{(k-1)}_a)^{-1} +\frac{\sum_{j=1}^{C}z_{ij}\mathbf{x}_{ij}  \mathbf{x}_{ij} ^{\tran}}{\left(\sigma^{(k-1)}_e\right)^2} \Bigg) \bsa_i\\
& \hspace{2cm} + \bsa_i ^\tran \frac{\sum_{j=1}^{C}z_{ij}\big(y_{ij} - \mathbf{x}_{ij} ^{\tran}\big(\bbeta^{(k-1)} +\mu^{(k-1)}_{b,j}\big)\big)\mathbf{x}_{ij} }{\left(\sigma^{(k-1)}_e\right)^{2}} \Bigg\} \\
\end{align*}
If $q_a^k (a_1, \cdots, a_R)$ represents the distribution $\mathcal{N}(\mu^{(k)}_{a,i}, \Sigma^{(k)}_{a,i})$, then
\begin{equation}
q_a^k (a_1, \cdots, a_R) \propto \exp \left \{ -\frac{1}{2} \bsa_i ^\tran {\left(\Sigma^{(k)}_{a,i}\right)}^{-1} \bsa_i + \bsa_i ^\tran {\left(\Sigma^{(k)}_{a,i}\right)}^{-1} \mu^{(k)}_{a,i} \right \}
\end{equation}
Thus,
\begin{equation*}
\mu^{(k)}_{a,i} = \Sigma^{(k)}_{a,i} \left(\frac{\sum_{j=1}^{C}z_{ij}\big(y_{ij} - \mathbf{x}_{ij} ^{\tran}\big(\bbeta^{(k-1)} +\mu^{(k-1)}_{b,j}\big)\big) \mathbf{x}_{ij}}{\left(\sigma^{(k-1)}_e\right)^{2}}\right)
\end{equation*}
and
\begin{equation*}
\Sigma^{(k)}_{a,i} = \left((\Sigma^{(k-1)}_a)^{-1} +\frac{\sum_{j=1}^{C}z_{ij}\mathbf{x}_{ij} \mathbf{x}_{ij}^{\tran}}{\left(\sigma^{(k-1)}_e\right)^2}\right)^{-1}.
\end{equation*}
Similarly,
\begin{align*}
q_b^k & (b_1, \cdots, b_C) \propto \Prob (y, \{\bsb_j\}_{j=1}^C | \bsa_i = E_{Q^{(k)}} [\bsa_i])\\
& \propto \exp \left \{ -\frac{1}{2} \sum_{j=1}^C \sum_{i=1}^R \frac{z_{ij} (y_{ij} - \mathbf{x}_{ij}^\tran (\bsb_j + \mu_{a,i}^{(k)}+ \bbeta))^2}{\left(\sigma^{(k-1)}_e\right)^2} -\frac{1}{2} \sum_{j=1}^C \bsb_j ^\tran (\Sigma_b^{(k-1)})^{-1} \bsb_j \right \}\\
& \propto \exp \Bigg \{ -\frac{1}{2} \sum_{j=1}^C \bsb_j^\tran \Bigg( (\Sigma^{(k-1)}_b)^{-1} +\frac{\sum_{i=1}^{R}z_{ij}\mathbf{x}_{ij} \mathbf{x}_{ij}^{\tran}}{\left(\sigma^{(k-1)}_e\right)^2} \Bigg) \bsb_j\\
& \hspace{2cm} + \bsb_j ^\tran \frac{\sum_{j=1}^{C}z_{ij}\big(y_{ij} - \mathbf{x}_{ij}^{\tran}\big(\bbeta^{(k-1)} +\mu^{(k)}_{a,i}\big)\big)\textbf{x}_{ij}}{\left(\sigma^{(k-1)}_e\right)^{2}} \Bigg\} \\
\end{align*}
For the same reasoning,
\begin{equation*}
\mu^{(k)}_{b,j} = \Sigma^{(k)}_{b,j} \left(\frac{\sum_{i=1}^{R}z_{ij}\big(y_{ij} - \textbf{x}_{ij}^{\tran}\big(\bbeta^{(k-1)} +\mu^{(k-1)}_{a,i}\big)\big)\textbf{x}_{ij}}{\left(\sigma^{(k-1)}_e\right)^{2}}\right)
\end{equation*}
and
\begin{equation*}
\Sigma^{(k)}_{b,j} = \left((\Sigma^{(k-1)}_b)^{-1} +\frac{\sum_{i=1}^{R}z_{ij}\textbf{x}_{ij} \textbf{x}_{ij}^{\tran}}{\left(\sigma^{(k-1)}_e\right)^2}\right)^{-1}
\end{equation*}
\section{Results under general setup}
\label{sec.generalized_results}
For the general setup, we consider the covariates to be the union of item-based features and client-based features, i.e., \( x_{ij} = (1, v_j, u_i) \). We include random effects for clients on item-based features and vice versa. Therefore, the model is given by
\[
y_{ij} = x_{ij}^\top \beta + v_j^\top a_i + u_i^\top b_j + \varepsilon_{ij},
\]
where \( a_i \overset{\text{i.i.d.}}{\sim} \mathcal{N}_3(0, \Sigma_a) \), \( b_j \overset{\text{i.i.d.}}{\sim} \mathcal{N}_3(0, \Sigma_b) \), and \( \varepsilon_{ij} \overset{\text{i.i.d.}}{\sim} \mathcal{N}(0, \sigma^2_e) \), with
\[
\Sigma_a = \begin{bmatrix}
1 & 0.2 & 0.2\\
0.2 & 1 & 0.2\\
0.2 & 0.2 & 1
\end{bmatrix}, \quad
\Sigma_b = \begin{bmatrix}
1 & 0.2 & 0.2\\
0.2 & 1 & 0.2\\
0.2 & 0.2 & 1
\end{bmatrix}, \quad \text{and} \quad \sigma^2_e = 1.
\]

We chose \( \beta = (0.1, 0.2, 0.3, 0.4, 0.5) \), and generated \( (v_j, u_i) \overset{\text{i.i.d.}}{\sim} \mathcal{N}_4(0, I) \).

In Figures~\ref{fig:beta_and_sigma2e_general_setup} and~\ref{fig:sigma_ab_general_setup}, we present the consistency of our algorithm in estimating the model parameters under the general setup. The results are similar to those presented earlier, i.e., the accuracy of our proposed algorithm approaches that of the maximum likelihood estimates as \( N \) increases. In Figure~\ref{fig:estimation_time_general_setup}, we compare the estimation time of our proposed algorithm with existing approach in the general setup; once again, the conclusions are consistent with those reported previously.

\begin{figure}[ht]
\centering
\includegraphics[width=\linewidth]{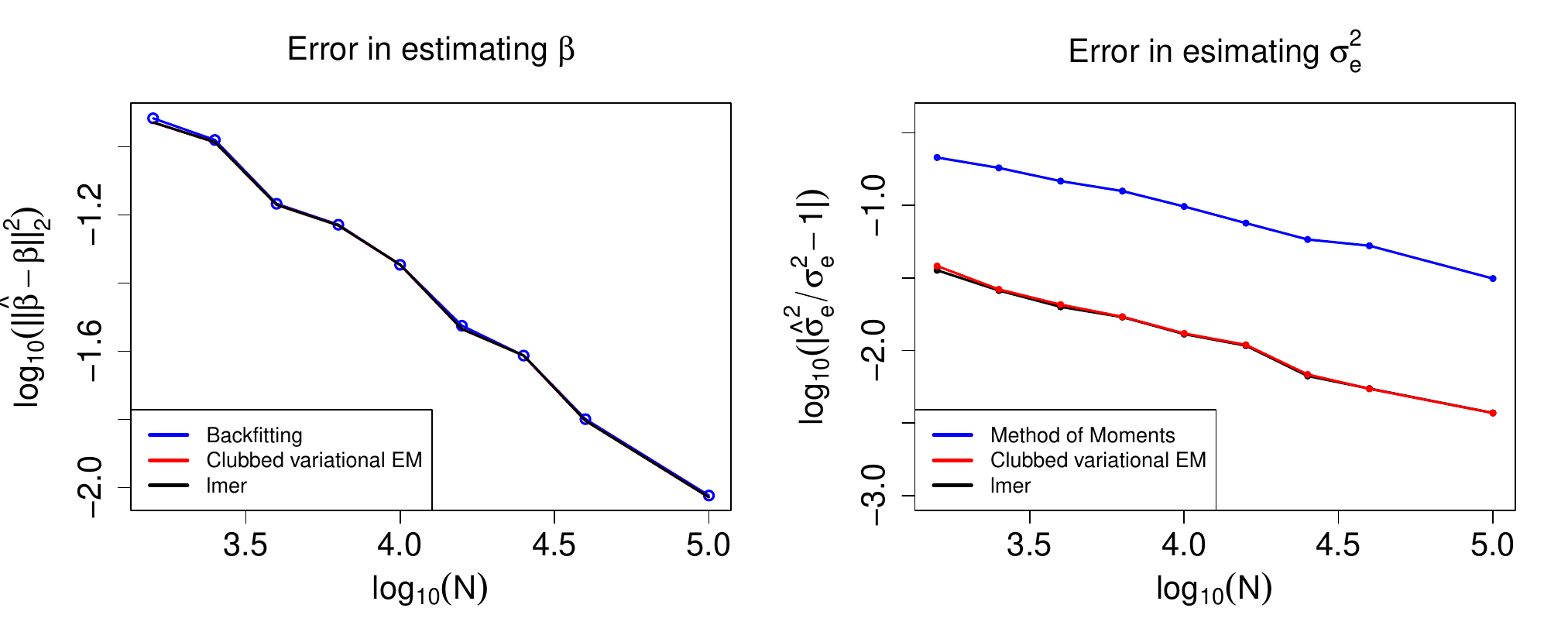}
\caption{General setup: Comparison of accuracy in estimating $\bbeta$ and $\sigma^2_e$ by variational EM (proposed algorithm) and lmer(existing algorithm). The errors in estimating $\bbeta$ are close to each other for the illustrated algorithms, which leads the lines in the plot to coincide.}
\label{fig:beta_and_sigma2e_general_setup}
\end{figure}
\begin{figure}[ht]
\centering
\includegraphics[width=\linewidth]{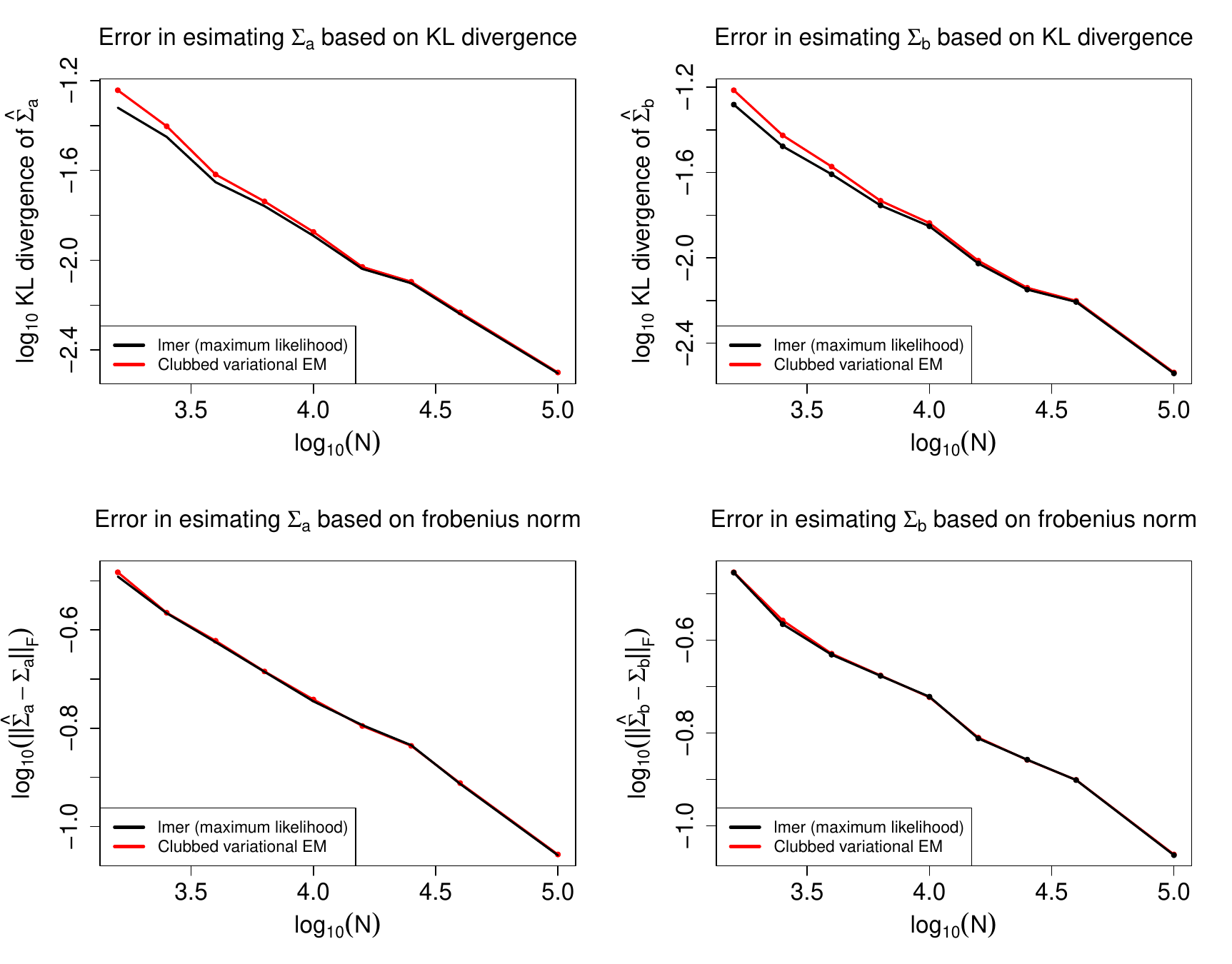}
\caption{General setup: Comparison of accuracy in estimating ${\Sigma}_a$ and ${\Sigma}_b$ based on KL divergence and Frobenius norm by variational EM (proposed algorithm) and lmer(existing algorithm).}
\label{fig:sigma_ab_general_setup}
\end{figure}
\begin{figure}[ht]
\centering
\includegraphics[scale=0.5]{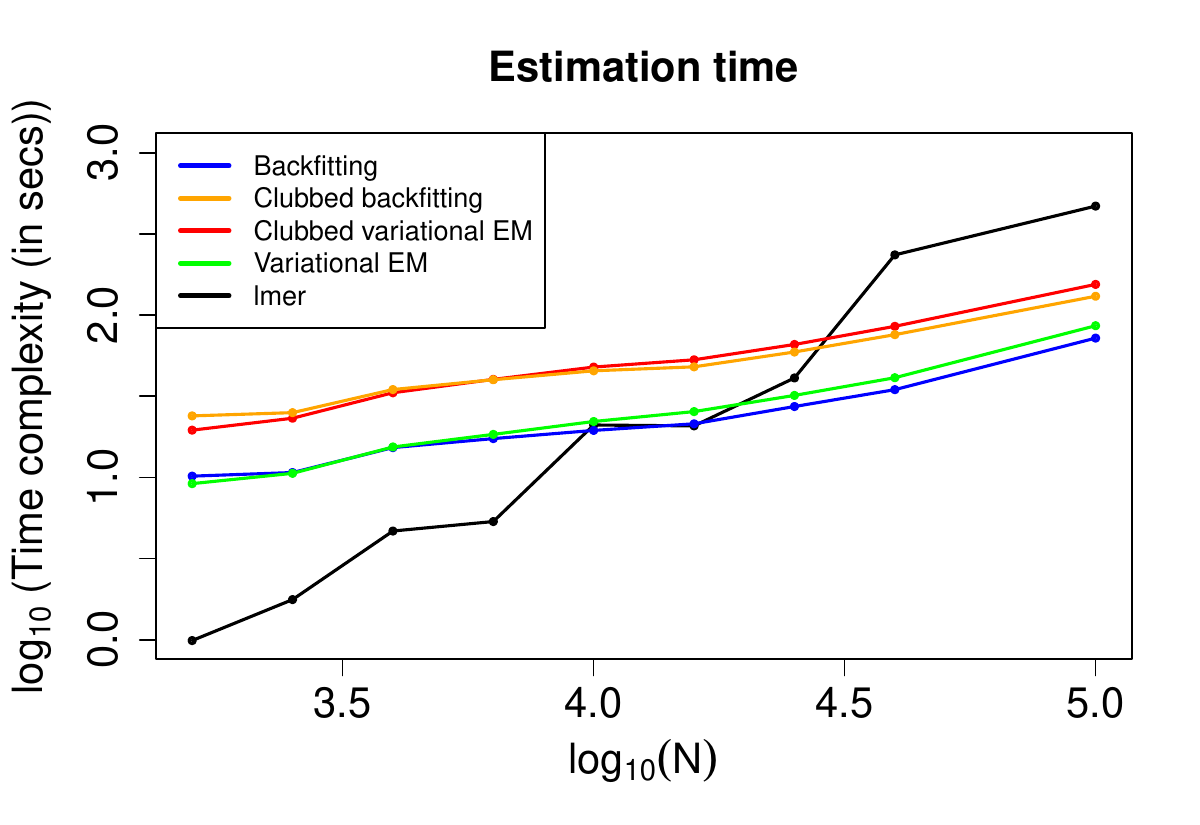}
\caption{General setup: Comparison of estimation time by backfitting, clubbed backfitting, variational EM, and clubbed variational EM (four proposed algorithms) and ``lmer"(existing algorithm).}
\label{fig:estimation_time_general_setup}
\end{figure}
\end{appendix}
\end{document}